\definecolor{indigo}{RGB}{88,86,214}
\theoremstyle{definition}
\newtheorem{theorem}{Theorem}[section]
\newtheorem{proposition}[theorem]{Proposition}
\newtheorem{example}[theorem]{Example}
\theoremstyle{definition}
\newtheorem{definition}[theorem]{Definition}
\theoremstyle{remark}
\newtheorem{remark}[theorem]{Remark}
\newcommand\bbr[1]{\left[{#1}\right]}
\newcommand\pbr[1]{\left({#1}\right)}
\newcommand\buy{y^{\text{\rm{buy}}}}
\newcommand\sell{y^{\text{\rm{sell}}}}
\newcommand\bl{b_{\text{l}}}
\newcommand\bu{b_{\text{u}}}
\newcommand\NN{\mathsf{NN}}
\newcommand\clamp{\mathop{\mathsf{clamp}}}
\begin{document}

\title{No-Transaction Band Network: A Neural Network Architecture for Efficient Deep Hedging}
\author{
    Shota~Imaki$^{\ast\dag}$,
    Kentaro~Imajo$^{\ddag}$,
    Katsuya~Ito$^{\ddag}$,
    Kentaro~Minami$^{\ddag}$ and
    Kei~Nakagawa$^{\mathsection}$%
    \thanks{%
        $\ast$ SI is a corresponding author.
        Emails: shota.imaki.0801@gmail.com,
        imos@preferred.jp,
        katsuya1ito@preferred.jp,
        minami@preferred.jp,
        kei.nak.0315@gmail.com
    }%
    \affil{%
        $^{\dag}$Department of Physics, The University of Tokyo, Tokyo 113-0033, Japan\\
        $^{\ddag}$Preferred Networks, Inc., Otemachi Bldg., 1-6-1 Otemachi, Chiyoda-ku, Tokyo 100-0004, Japan\\
        $^{\mathsection}$Nomura Asset Management Ltd., 2-2-1, Toyosu, Koto-ku, Tokyo 135-0061, Japan
    }%
}
\date{\today}

\maketitle

\begin{abstract}
    Deep hedging~\citep{buehler2019deep} is a versatile framework to compute the optimal hedging strategy of derivatives in incomplete markets.
    However, this optimal strategy is hard to train due to action dependence, that is, the appropriate hedging action at the next step depends on the current action.
    To overcome this issue, we leverage the idea of a no-transaction band strategy, which is an existing technique that gives optimal hedging strategies for European options and the exponential utility.
    We theoretically prove that this strategy is also optimal for a wider class of utilities and derivatives including exotics.
    Based on this result, we propose a \textit{no-transaction band network}, a neural network architecture that facilitates fast training and precise evaluation of the optimal hedging strategy.
    We experimentally demonstrate that for European and lookback options, our architecture quickly attains a better hedging strategy in comparison to a standard feed-forward network.
\end{abstract}

\begin{keywords}
hedging, derivatives, deep learning, transaction costs.
\end{keywords}

\section{Introduction} \label{sec-introduction}

Hedging derivatives is an essential operation in the trillion-dollar industry of derivatives.
In an idealized world with a complete market (i.e., a market without frictions), the perfect hedge is accessible with Black--Scholes' replicating portfolio~\citep{black1973pricing}.
However, a paradoxical situation known as ``Catch-22 of derivative pricing''~\citep{davis1993european} arises in such a world, which suggests a practical limitation of Black--Scholes' idealized setting:
If investors could replicate derivatives with underlying instruments, these derivatives are merely redundant.
The real market, in contrast, always involves a rich variety of frictions, transaction cost in particular~\citep{collins1991methodology}, and thereby makes hedging optimization much more challenging.

Studies of derivatives under transaction cost date back to \cite{leland1985option}.
This work extended Black--Scholes' replicating portfolio to a discretely rebalanced portfolio, and then \cite{grannan1996minimizing} further pushed their idea to approximate the optimal hedge in the sense of mean-variance.
The concept of ``optimal'' hedge was later refined in the framework of utility indifference pricing~\citep{hodges1989optimal,henderson2004utility}.
Based on this framework, a number of studies have pursued the optimal hedge using Markov chain approximation~\citep{monoyios2004option}, asymptotic expansion with respect to a small transaction cost~\citep{whalley1997asymptotic}, and so forth.
However, it is still challenging to pursue a true optimal hedge while incorporating the nonlinear effects of transaction costs.

Deep hedging~\citep{buehler2019deep} ingeniously broke new ground by representing a hedging strategy by a deep neural network.
By virtue of the high expressivity of a neural network~\citep{hornik1991approximation} and modern optimization algorithms~\citep{kingma2014adam},
one can expect to achieve the optimal hedge by training a neural network.
Indeed, their experiments show high feasibility and scalability of deep hedging algorithms for options under transaction costs.

Nevertheless, training in deep hedging bears an essential difficulty of action-dependence, which is the central issue in this work.
That is, since an appropriate hedging action at the next time step depends on the current action, a neural network must explore a vast function space to find the optimum.
Indeed, as we experimentally show in this paper, a naively designed neural network does not converge to the optimum even after a considerable amount of gradient updates.
Therefore, in this work, we develop an efficient methodology to facilitate training in order to fully leverage the versatility of deep hedging.

Proper inductive biases, namely, task-specific structures incorporated in neural networks, often bring about breakthroughs in deep learning.
Deep neural networks with proper inductive biases can quickly attain better optima,
as exemplified by convolutional neural networks for image recognition~\citep{krizhevsky2017imagenet},
long short term memory for time-series processing~\citep{hochreiter1997long}, and
Transformers for sequence transduction~\citep{vaswani2017attention}.
Considering that there is no free lunch method that acclimates to arbitrary tasks~\citep{wolpert1997no}, it is quite important to design neural networks that leverage task-specific structures.

In our work, what guides us to a proper inductive bias are venerable studies of the ``\textit{no-transaction band strategy}''~\citep{davis1990portfolio,davis1993european}.
This strategy saves on transaction cost by refraining from rebalancing unless one's hedge ratio exceeds some permissible range (i.e., no-transaction band) and is proved to be optimal for European options and the exponential utility~\citep{davis1993european}.
Since the no-transaction band is independent of the current hedge ratio, this strategy alleviates the difficulty of action-dependence and has helped to pursue the optimal hedge in previous studies~\citep{whalley1997asymptotic,kallsen2015option}.

This paper proposes a ``\textit{no-transaction band network}'', which is a neural network architecture for efficient deep hedging.
This architecture has two distinctive features compared to prior works.
First, the neural network does not use the current action (i.e., hedge ratio) but uses any other information a human trader might use as an input.
This feature circumvents the complication of action-dependence and facilitates training.
Second, the neural network outputs a no-transaction band strategy.
We theoretically prove that this strategy is optimal not only for European options but for a wider class of utilities and derivatives including exotics.
This result suggests that we can facilitate training by introducing a neural network architecture that outputs no-transaction band strategies.
In practice, this architecture can be implemented by simply adding one special layer to an arbitrary neural network, which is outlined in Figure~\ref{fig-nn-ntb}.
We experimentally show that our network can quickly make a better hedge and evaluate precise prices of derivatives including exotic options.

\subsection{Related Work}

There is a vast literature on pricing and hedging with neural networks~\citep{ruf2020neural}.
\cite{hutchinson1994nonparametric} first proposed using a neural network as a nonparametric model for option pricing and hedging.
\cite{hutchinson1994nonparametric} demonstrated that neural networks can be used successfully to estimate an option pricing formula, with good out-of-sample pricing and delta-hedging performance.
According to \cite{ruf2020neural}, while there are many papers on pricing using neural networks, only a few papers discuss hedging with neural networks.
Most studies support the effectiveness of hedging with a neural network.
\cite{amilon2003neural} examined whether a neural network can price and hedge a call option better than the Black--Scholes model with historical and implied volatility as a benchmark.
They empirically revealed that neural network models outperform the benchmarks both in pricing and hedging performances.
Recently, \cite{buehler2019deep} developed deep hedging, a neural network framework for hedging under market frictions with convex risk measure as a loss function.
Numerical experiments in \cite{buehler2019deep} demonstrated that a deep hedging algorithm can feasibly be used to hedge a European option under exponential utility.
Our paper extends the deep hedging framework to realize the fast and precise computation of an optimal hedging framework for more general convex risk measures and options.

Also in the field of finance, proper inductive biases often improve the empirical performance of neural networks.
\cite{garcia2000pricing} is one of the first papers that embed financial inductive bias into the construction of neural networks.
In options markets, options are usually analyzed and priced based on the moneyness (i.e., the ratio of the stock price of the underlying asset to the strike price of the option) and time-to-maturity of the option.
They trained their neural network model by what they call the homogeneity hint technique which categorizes and clusters the options data based on the moneyness and time-to-maturity of the option.
Based on the inductive bias, the hint guides the neural network model towards learning about properties of the unknown function.
\cite{das2017new} proposed a hybrid model that combines parametric option pricing models such as the Black--Scholes model, the Monte Carlo option pricing model, and the finite difference method with nonparametric machine learning techniques and incorporate a homogeneity hint into the model.
They confirmed that the proposed hybrid model is viable while effective and provides better predictive performance compared to several benchmark models.

Besides \cite{garcia2000pricing} and \cite{das2017new}, there are several related applications of neural networks with financial inductive bias.
\cite{dugas2009incorporating} first design a neural network architecture that incorporates no-arbitrage conditions such as convexity of option prices.
They proved universal approximation properties for the neural network architecture and demonstrated that these neural networks reduce the bias and variance of the Call option pricing.
\cite{jang2019generative} proposed the generative Bayesian neural networks consistent with the no-arbitrage pricing structure.
The generative Bayesian neural network models obtained a better calibration and prediction performance for American option pricing than classical American option models.
\cite{imajo2020deep} proposed a neural network architecture that incorporates widely accepted financial inductive biases of stock time series such as amplitude and time-scale invariance.
Through empirical studies on U.S. and Japanese stock market data, they demonstrated that their proposed method can significantly improve the performance of portfolios on various performance measures.

In our work, we focus on the no-transaction band strategy as an inductive bias.
When the costs are linear in the dollar amount of stock traded, the optimal strategy is to make no trades when the portfolio weights lie in a certain no-trade interval.
When the portfolio weights reach either endpoint of this interval, the optimal strategy is to trade to keep the weight on the boundary of, or inside, the interval~\citep{constantinides1986capital,davis1990portfolio,dumas1991exact,davis1993european}.
This strategy is optimal in hedging derivatives under a linear transaction cost.
In fact, \cite{davis1993european} proved that this strategy is also an optimal hedging policy for European options and the exponential utility.
Based on the above researches, we propose a neural network architecture that incorporates the no-transaction band that is not used as an inductive bias for network structures.
We experimentally demonstrate that our architecture quickly attains a better hedging strategy in comparison to a standard feed-forward network.

\subsection{Outline}

This paper is organized as follows.
We first quickly review utility indifference pricing in Section~\ref{sec-price}.
Utility indifference pricing is a natural and versatile framework to characterize the optimality of hedging strategy and the fair price of derivatives under market frictions.
Section~\ref{sec-deephedging} summarizes how a deep hedging algorithm seeks the optimal hedging strategy using a neural network.
We also point out practical difficulties in training.
In Section~\ref{sec-ntb}, we propose the no-transaction band network.
After illustrating the implementation and interpretation of this architecture, we show the optimality of the strategy encoded in this network.
In Section~\ref{sec-experiment}, we exhibit the results of numerical experiments.
It is demonstrated that the no-transaction band network outperforms a standard feed-forward network in the fast and precise computation of hedging strategy and derivative prices under transaction cost.
Section~\ref{sec-conclusion} presents our summary and outlook.

\section{Utility Indifference Pricing of Derivatives} \label{sec-price}

We first review the utility indifference pricing of derivatives.
We explain the asset price dynamics, transaction cost modeling, and derivatives in Section~\ref{subsec-market}.
Then, Section~\ref{subsec-criteria} formulates a criterion of a market's participant by means of utility and corresponding convex risk measure.
Finally, in Section~\ref{subsec-hedge}, we define the utility indifference price of derivatives upon optimal hedge in terms of the risk measure.
A more comprehensive formulation can be found in \cite{hodges1989optimal} and \cite{henderson2004utility}.

\subsection{Market} \label{subsec-market}

We consider a market with discrete time steps $t \in \{t_0 = 0, t_1, \dots, t_n = T\}$ and a single tradable asset.
This asset has a mid-price given by a stochastic process $S = \{S_{t_i} | S_{t_i} > 0\}_{0 \leq {t_i} \leq T}$.
One example of a stochastic process is geometric Brownian motion with volatility $\sigma > 0$ and vanishing drift,
\begin{align}
    \Delta S_{t_i}
        & = \sigma S_{t_i} \Delta W_{t_i} \,,
    \label{eq-brown}
\end{align}
where $\Delta S_{t_i} \equiv S_{t_{i + 1}} - S_{t_i}$ and $\Delta W_{t_i}$ is independent normal distribution.

A participant of this market can trade the asset $S$.
We let $\delta = \{\delta_{t_i} | \delta_{t_i} \in \mathbf{R}\}_{0 \leq {t_i} \leq T}$ denote a signed number of shares held by the participant at each time step and suppose $\delta_0 = 0$.
Causality requires the participant to determine $\delta_{t_i}$ based on the information available before ${t_i}$.

The market levies a transaction cost proportional to traded values with a cost rate $c \in [0, 1)$.
The participant's total expense on transaction costs until $t_i$ is denoted by $C_{t_i}(S, \delta)$, which satisfies $C_0(S, \delta) = 0$ and
\begin{align}
    \Delta C_{t_i} (S, \delta)
        & = c S_{t_i}|\Delta \delta_{t_i}| \,.
\end{align}
In practical contexts, the transaction cost includes the bid-ask spread, commission fee, temporary market impact, and so forth~\citep{collins1991methodology}.

A derivative $Z$ of the asset gives a payoff $Z(S)$, which is contingent on a price trajectory.
We assume that the payoff is settled at maturity $t = T$ and discounted by zero risk-free-rate.
Our consideration excludes derivatives allowing for early exercises, such as American options.
The following derivatives are used in our numerical experiments in Section~\ref{sec-experiment}.
\begin{example}[European Option]  \label{example-european}
    A payoff of a European call option with a strike $K$ is given by $Z = \max(S_T - K,  0)$.
\end{example}
\begin{example}[Lookback Option]  \label{example-lookback}
    A payoff of a lookback call option with a fixed strike $K$ is given by $Z = \max[\max(\{S_{t_i}\}_{0 \leq {t_i} \leq T}) - K, 0]$.
\end{example}

\subsection{Criteria}  \label{subsec-criteria}

Let us then describe the preference of a market participant to their portfolio.
The formulation begins with utility.
\begin{definition}[Utility]
    A function $u: \mathbf{R} \to \mathbf{R}$ is called utility if the following conditions are satisfied.
    \begin{enumerate}[(i)]
        \item
            Non-decreasing:
            For $x_1 \leq x_2$, $u(x_1) \leq u(x_2)$ holds.
        \item
            Concavity:
            For $a \in [0, 1]$, $u(a x_1 + (1 - a) x_2) \geq a u(x_1) + (1 - a) u(x_2)$ holds.
    \end{enumerate}
\end{definition}

An optimized certainty equivalent~\citep{ben1986expected,ben2007old} is a convex risk measure associated with the utility.
\begin{definition}[Optimized Certainty Equivalent]
    Let $u$ be a utility function.
    For a real-valued random variable $X$, we define an optimized certainty equivalent $\rho_u$ corresponding to $u$ as follows.
    \begin{align}
        \rho_u(X)
            & \equiv
                \inf_{w \in \mathbf{R}}
                \{w - \mathbf{E}[u(X + w)]\} \,.
        \label{eq-rho}
    \end{align}
\end{definition}
\begin{proposition}[Optimized Certainty Equivalent as Convex Risk Measure]
    An optimized certainty equivalent is a convex risk measure, namely, satisfies the following properties.
    \begin{enumerate}[(i)]
        \item
            Non-increasing:
            For $X_1 \leq X_2$,
            $\rho_u(X_1) \geq \rho_u(X_2)$ holds.
        \item
            Convexity:
            For $a \in [0, 1]$,
            $\rho_u(a X_1 + (1 - a) X_2) \leq a \rho_u(X_1) + (1 - a) \rho_u(X_2)$ holds.
        \item
            Cash equivalent:
            For $\eta \in \mathbf{R}$, $\rho_u(X + \eta) = \rho_u(X) - \eta$ holds.
    \end{enumerate}
\end{proposition}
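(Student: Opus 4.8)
The plan is to verify the three listed properties one at a time, each straight from the variational formula~\eqref{eq-rho}, using nothing beyond monotonicity and concavity of $u$ together with the monotonicity and linearity of expectation; in particular no continuity, differentiability, or growth condition on $u$ is needed.

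For monotonicity, suppose $X_1 \leq X_2$ almost surely. Then for every fixed $w \in \mathbf{R}$ we have $X_1 + w \leq X_2 + w$, so condition~(i) on $u$ gives $u(X_1 + w) \leq u(X_2 + w)$ pointwise and hence $\mathbf{E}[u(X_1 + w)] \leq \mathbf{E}[u(X_2 + w)]$; consequently $w - \mathbf{E}[u(X_1 + w)] \geq w - \mathbf{E}[u(X_2 + w)]$ for all $w$, and taking the infimum over $w$ preserves this, yielding $\rho_u(X_1) \geq \rho_u(X_2)$. For cash equivalence, fix $\eta \in \mathbf{R}$ and change variables $v = w + \eta$ in the infimum:
\begin{align}
    \rho_u(X + \eta)
        = \inf_{w \in \mathbf{R}} \{ w - \mathbf{E}[u(X + w + \eta)] \}
        = \inf_{v \in \mathbf{R}} \{ (v - \eta) - \mathbf{E}[u(X + v)] \}
        = \rho_u(X) - \eta \,.
\end{align}

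Convexity is the only part that actually uses concavity of $u$. Fix $a \in [0,1]$ and pick arbitrary test points $w_1, w_2 \in \mathbf{R}$; set $w := a w_1 + (1 - a) w_2$, so that $a(X_1 + w_1) + (1 - a)(X_2 + w_2) = (a X_1 + (1 - a) X_2) + w$. Condition~(ii) on $u$ applied pointwise, followed by monotonicity of expectation, gives
\begin{align}
    \mathbf{E}\big[u\big((a X_1 + (1-a) X_2) + w\big)\big]
        & \geq a\, \mathbf{E}[u(X_1 + w_1)] + (1-a)\, \mathbf{E}[u(X_2 + w_2)] \,,
\end{align}
hence
\begin{align}
    \rho_u\big(a X_1 + (1-a) X_2\big)
        & \leq w - \mathbf{E}\big[u\big((a X_1 + (1-a) X_2) + w\big)\big] \\
        & \leq a\big(w_1 - \mathbf{E}[u(X_1 + w_1)]\big) + (1-a)\big(w_2 - \mathbf{E}[u(X_2 + w_2)]\big) \,.
\end{align}
Because the right-hand side is the sum of a term depending only on $w_1$ and a term depending only on $w_2$, taking the infimum over $w_1$ and over $w_2$ independently gives $\rho_u(a X_1 + (1-a) X_2) \leq a \rho_u(X_1) + (1-a) \rho_u(X_2)$.

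The computations are elementary, so the only real subtlety is well-definedness of $\rho_u$: one must ensure $\mathbf{E}[u(X + w)]$ makes sense for the random variables in question. I would handle this by restricting to $X$ for which $u(X + w)$ is integrable (or has an expectation in $[-\infty, +\infty)$) for every $w$, or by letting $\rho_u$ take values in $[-\infty, +\infty)$; under either convention the displayed inequalities remain valid for extended reals, and the separate minimization over $w_1$ and $w_2$ in the convexity step is legitimate precisely because the bound is additively separable in the two variables. This is the one place I would be careful; everything else is bookkeeping.
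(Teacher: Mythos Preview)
Your proof is correct and follows essentially the same route as the paper's: monotonicity of $u$ for part~(i), the convex combination $w = a w_1 + (1-a) w_2$ together with concavity of $u$ for part~(ii), and the change of variables $v = w + \eta$ for part~(iii). Your write-up is in fact a bit more careful than the paper's in justifying why the infima over $w_1$ and $w_2$ may be taken independently and in flagging the integrability caveat, but the argument is the same.
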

\begin{proof}
    (i) Since a utility $u$ is non-decreasing, for $X_1 \leq X_2$, $w - \mathbf{E}[u(X_1 + w)] \geq w - \mathbf{E}[u(X_2 + w)]$ holds and therefore it follows that $\rho_u(X_1) \geq \rho_u(X_2)$.
    (ii) Let us denote $X_a \equiv a X_1 + (1 - a) X_2$ and $w_a \equiv a w_1 + (1 - a) w_2$ for $a \in [0, 1]$.
    The concavity of a utility $u$ implies that $\mathbf{E}[w_a - u(X_a + w_a)] \leq a \mathbf{E}[w_1 - u(X_1 + w_1)] + (1 - a) \mathbf{E}[w_2 - u(X_2 + w_2)]$.
    Taking the infima with respect to $w_1$, $w_2$, $w_a$ of the both hand sides yields $\rho_u(X_a) \leq a \rho_u(X_1) + (1 - a) \rho_u(X_2)$.
    (iii) A short algebra gives $\rho_u(X + \eta) = \inf_w \mathbf{E}[(w + \eta) - u(X + w + \eta)] = \inf_{w'} \mathbf{E}[w' - u(X + w')] = \rho_u(X)$ where $w' = w + \eta$.
\end{proof}
In particular, we employ the entropic risk measure in our experiments in Section~\ref{sec-experiment}.
Derivative pricing based on this risk measure gives exponential utility indifference price, which has been studied in \cite{hodges1989optimal}, \cite{davis1993european}, and \cite{whalley1997asymptotic}.
\begin{example}[Entropic Risk Measure]  \label{example-entropic}
    Consider exponential utility, $u(x) = - \exp(-\lambda x)$ with $\lambda > 0$ being a risk aversion coefficient.
    The corresponding optimized certainty equivalent is called entropic risk measure, which reads $\rho_u(X) = (1 / \lambda) \log \mathbf{E} [\exp(-\lambda X)]$.
\end{example}

\subsection{Hedging and Pricing} \label{subsec-hedge}

Before formulating hedging and pricing, it would be informative to illustrate a typical situation in which a dealer sells over-the-counter derivatives to its customers.
A dealer sells a derivative to its customer and consequently obliges a liability to settle the payoff at maturity.
The dealer may hedge the risk of this liability by trading an underlying asset of the derivative.
On the premise of the optimal hedge in terms of the dealer's risk measure, the dealer quotes such a price that compensates the residual risk after hedging.

A dealer of a derivative $Z$ may hedge their liability by their trading strategy $\delta$.
The resulting final wealth at maturity is given by
\begin{align}
    P(-Z, S, \delta)
        & = - Z(S) + (\delta \cdot S)_T - C_T(S, \delta) \,,
        \label{eq-wealth}
    \\
    (\delta \cdot S)_T
        & \equiv \sum_{i = 0}^{n - 1} \delta_{t_i} \Delta S_{t_i} \,,
\end{align}
which adds up the payoff to the customer, capital gains from the underlying asset, and the transaction cost.
Our formulation neglects transaction cost to liquidate the terminal position.
The hedger wishes to optimize the distribution of the final wealth~\eqref{eq-wealth} in terms of their convex risk measure $\rho_u$.
The optimal convex risk measure qualifies as a risk measure and is interpreted as the derivative's residual risk after the optimal hedge.
\begin{definition}[Optimal Convex Risk Measure]
    The optimal convex risk measure is defined as the infimum
    \begin{align}
        \pi_u(-Z)
            & \equiv
                \inf_{\delta}
                \rho_u (P(-Z , S, \delta)) \,.
        \label{eq-pi}
    \end{align}
    The minimizer of the problem~\eqref{eq-pi} is called the optimal hedging strategy for $Z$.
\end{definition}
\begin{proposition}[Optimal Convex Risk Measure as Convex Risk Measure]
    An optimal risk measure~\eqref{eq-pi} qualifies as a convex risk measure, namely, satisfies the following properties.
    \begin{enumerate}[(i)]
        \item
            Non-increasing:
            For $X_1 \leq X_2$, $\pi_u(X_1) \geq \pi_u(X_2)$ holds.
        \item
            Convex:
            For $a \in [0, 1]$, $\pi_u(a X_1 + (1 - a) X_2) \leq a \pi_u(X_1) + (1 - a) \pi_u(X_2)$ holds.
        \item
            Cash equivalent:
            For $\eta \in \mathbf{R}$, $\pi_u(X + \eta) = \pi_u(X) - \eta$ holds.
    \end{enumerate}
\end{proposition}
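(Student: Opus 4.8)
The plan is to mirror the proof of the analogous proposition for $\rho_u$, exploiting two structural facts about the wealth functional in \eqref{eq-wealth}: for fixed $\delta$ the map $X \mapsto P(X, S, \delta) = X + (\delta \cdot S)_T - C_T(S, \delta)$ is non-decreasing and affine, and — the key point — for fixed payoff the map $\delta \mapsto P(X, S, \delta)$ is \emph{concave}, because the capital-gains term $(\delta \cdot S)_T = \sum_i \delta_{t_i} \Delta S_{t_i}$ is linear in $\delta$ while the cost term $C_T(S, \delta) = \sum_i c S_{t_i} |\Delta \delta_{t_i}|$ is convex in $\delta$ (a sum of absolute values of linear functionals), so $-C_T$ is concave. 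Each property of $\pi_u$ then follows by combining the corresponding property of $P$ with the already-established properties of $\rho_u$ and the elementary fact that $f \geq g$ pointwise implies $\inf f \geq \inf g$.

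I would dispatch (iii) first, as it is immediate: $P(X + \eta, S, \delta) = P(X, S, \delta) + \eta$ for every admissible $\delta$, so the cash-equivalent property of $\rho_u$ gives $\rho_u(P(X + \eta, S, \delta)) = \rho_u(P(X, S, \delta)) - \eta$, and taking the infimum over $\delta$ yields $\pi_u(X + \eta) = \pi_u(X) - \eta$. For (i): if $X_1 \leq X_2$ then $P(X_1, S, \delta) \leq P(X_2, S, \delta)$ for every $\delta$, so by the non-increasing property of $\rho_u$ we have $\rho_u(P(X_1, S, \delta)) \geq \rho_u(P(X_2, S, \delta))$ for every $\delta$; taking infima on both sides preserves the inequality and gives $\pi_u(X_1) \geq \pi_u(X_2)$.

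The crux is (ii). Fix $a \in [0, 1]$ and $\varepsilon > 0$, and choose $\varepsilon$-optimal strategies $\delta^1, \delta^2$ for $X_1, X_2$, i.e. $\rho_u(P(X_j, S, \delta^j)) \leq \pi_u(X_j) + \varepsilon$ for $j = 1, 2$. Put $\delta^a \equiv a \delta^1 + (1 - a) \delta^2$; this is admissible, being a pointwise convex combination of adapted processes with $\delta^a_0 = 0$. Using linearity of $(\delta \cdot S)_T$ in $\delta$ together with the triangle inequality $|a \Delta \delta^1_{t_i} + (1 - a) \Delta \delta^2_{t_i}| \leq a |\Delta \delta^1_{t_i}| + (1 - a) |\Delta \delta^2_{t_i}|$, one obtains
\[
    P(a X_1 + (1 - a) X_2, S, \delta^a) \;\geq\; a\, P(X_1, S, \delta^1) + (1 - a)\, P(X_2, S, \delta^2) \,.
\]
Applying the non-increasing property of $\rho_u$ to this inequality and then its convexity gives $\rho_u(P(a X_1 + (1 - a) X_2, S, \delta^a)) \leq a \rho_u(P(X_1, S, \delta^1)) + (1 - a) \rho_u(P(X_2, S, \delta^2)) \leq a \pi_u(X_1) + (1 - a) \pi_u(X_2) + \varepsilon$. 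Since $\pi_u(a X_1 + (1 - a) X_2) \leq \rho_u(P(a X_1 + (1 - a) X_2, S, \delta^a))$ by definition of the infimum, letting $\varepsilon \downarrow 0$ yields the claim.

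I expect the only genuine subtlety to be the bookkeeping in (ii): verifying that the convex combination of admissible strategies is again admissible, and handling the case in which the infimum defining $\pi_u$ is not attained, which forces the use of $\varepsilon$-optimal strategies rather than exact minimizers. The essential inequality in the display is nothing more than concavity of $\delta \mapsto P(\cdot, S, \delta)$, which in turn reduces to the triangle inequality for the transaction-cost term.
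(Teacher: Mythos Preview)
Your proof is correct and follows essentially the same approach as the paper: monotonicity of $P$ in $X$ for (i), cash-translation of $P$ for (iii), and for (ii) the concavity of $\delta \mapsto P(X,S,\delta)$ (equivalently convexity of $C_T$) combined with monotonicity and convexity of $\rho_u$ applied to the convex combination $\delta^a = a\delta^1 + (1-a)\delta^2$. Your use of $\varepsilon$-optimal strategies in (ii) is a slight refinement over the paper's terser ``taking infima'' phrasing, and you are explicit that both monotonicity and convexity of $\rho_u$ are invoked in the chain of inequalities, but the substance is the same.
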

\begin{proof}
    (i) Since a convex risk measure $\rho_u$ is non-increasing,
    for $X_1 \leq X_2$, $P(X_1, S, \delta) \leq P(X_2, S, \delta)$ and so $\rho_u(P(X_1, S, \delta)) \geq \rho_u(P(X_2, S, \delta))$ holds.
    Therefore it follows that $\pi_u(X_1) \geq \pi_u(X_2)$.
    (ii)
    Let us denote $X_a = a X_1 + (1 - a) X_2$ and $\delta_a = a \delta_1 + (1 - a) \delta_2$ for $a \in [0, 1]$.
    Since $C_T(\delta, S)$ is convex with respect to $\delta$, we have $P(X_a, S, \delta_a) \geq a P(X_1, S, \delta_1) + (1 - a) P(X_2, S, \delta_2)$ and therefore
    the convexity of $\rho_u$ gives rise to $\rho_u(P(X_a, S, \delta_a)) \leq a \rho_u(P(X_1, S, \delta_1)) + (1 - a) \rho_u(P(X_2, S, \delta_2))$.
    Taking the infima with respect to $\delta_a$, $\delta_1$, $\delta_2$ of the both hand sides yields $\pi_u(X_a) \leq a \pi_u(X_1) + (1 - a) \pi_u(X_2)$.
    (iii)
    Since $P(X + \eta, S, \delta) = P(X, S, \delta) + \eta$, the cash equivalence of the convex risk measure implies
    $\pi_u(X + \eta) = \inf_\delta \rho_u(P(X, S, \delta) + \eta) = \inf_\delta \rho_u(P(X, S, \delta)) - \eta = \pi_u(X) - \eta$.
\end{proof}

We are now ready to define the utility indifference price of a derivative $Z$.
The fair price $p$ is identified with such an amount of cash that a hedger, who will optimally hedge their liability, deems indifferent between obliging a liability with $p$ being paid and having no liability.
This condition gives rise to the equation $\pi_u(- Z + p) = \pi_u(0)$, which in turn suggests the following definition.
\begin{definition}[Utility Indifference Price]
    A utility indifference price with respect to utility $u$ of a derivative $Z$ is given by
    \begin{align}
        p_u(Z)
            & \equiv \pi_u(-Z) - \pi_u(0) \,.
        \label{eq-price}
    \end{align}
\end{definition}

The utility indifference pricing extends the pricing via derivative replication in a complete market.
\begin{proposition}[Relation to Replication Pricing in Complete Market]
    Assume a market without transaction cost.
    If $Z$ is replicable by $S$, that is, $P(-Z, S, \delta^*) = -p_0$ holds for some hedging strategy $\delta^*$ and $p_0 \in \mathbf{R}$, the utility indifference price is given by $p_u(Z) = p_0$.
\end{proposition}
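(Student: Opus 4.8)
The plan is to reduce the claim to the cash-equivalence property of $\pi_u$, exploiting the fact that in the absence of transaction cost the wealth functional $P(\cdot, S, \cdot)$ is affine in the strategy argument and that translating a strategy by a fixed admissible strategy is a symmetry of the optimization problem \eqref{eq-pi}.

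First I would specialize the setup to $c = 0$, so that $\Delta C_{t_i}(S,\delta) = 0$ for every $i$ and hence $C_T(S,\delta) \equiv 0$. The final wealth \eqref{eq-wealth} then simplifies to $P(X, S, \delta) = X(S) + (\delta \cdot S)_T$, which is linear in $\delta$ because $(\delta \cdot S)_T = \sum_{i=0}^{n-1} \delta_{t_i}\Delta S_{t_i}$ is. The replication hypothesis $P(-Z, S, \delta^*) = -p_0$ becomes $-Z(S) + (\delta^* \cdot S)_T = -p_0$, i.e. $-Z(S) = -p_0 - (\delta^* \cdot S)_T$.

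Next, for an arbitrary admissible strategy $\delta$ I would substitute this identity into the final wealth and collect terms:
\begin{align}
    P(-Z, S, \delta)
        & = -Z(S) + (\delta\cdot S)_T
          = -p_0 + \bigl((\delta - \delta^*)\cdot S\bigr)_T
          = P(-p_0, S, \delta - \delta^*) \,.
\end{align}
Taking the infimum over all admissible $\delta$ on both sides, and using that the translation $\delta \mapsto \delta - \delta^*$ is a bijection of the set of admissible strategies onto itself, I would obtain $\pi_u(-Z) = \pi_u(-p_0)$. Finally, the cash-equivalence property of $\pi_u$ (applied with $X = 0$ and $\eta = -p_0$) gives $\pi_u(-p_0) = \pi_u(0) + p_0$, so that $p_u(Z) = \pi_u(-Z) - \pi_u(0) = p_0$, as claimed.

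The step that deserves the most care — essentially the only non-bookkeeping point — is the assertion that $\delta \mapsto \delta - \delta^*$ maps the admissible set onto itself. This amounts to checking that the two defining requirements on a strategy are stable under addition and negation: causality (each $\delta_{t_i}$ being determined by the information available before $t_i$) is preserved because a sum of random variables measurable with respect to the same information is again measurable with respect to it, and the initial condition $\delta_0 = 0$ is preserved because $\delta^*_0 = 0$ for the legitimate hedging strategy $\delta^*$. If the model also imposes an integrability condition on $(\delta\cdot S)_T$, it too is inherited under such a finite linear combination. Once this symmetry is granted, the argument is entirely formal.
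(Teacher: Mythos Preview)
Your proof is correct and follows essentially the same route as the paper: exploit the linearity of $P$ in $\delta$ when $c=0$ to write $P(-Z,S,\delta) = -p_0 + P(0,S,\delta-\delta^*)$, pass to the infimum via the bijection $\delta\mapsto\delta-\delta^*$, and invoke cash equivalence. The only cosmetic difference is that the paper applies cash equivalence at the level of $\rho_u$ before taking the infimum, whereas you apply it at the level of $\pi_u$ afterward; both give $\pi_u(-Z)=\pi_u(0)+p_0$ immediately.
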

\begin{proof}
    For arbitrary strategy $\delta$, we have $P(-Z, S, \delta) = P(-Z, S, \delta^*) + P(0, S, \delta - \delta^*) = -p_0 +  P(0, S, \delta - \delta^*)$ and therefore
    $\rho_u(P(-Z, S, \delta)) = p_0 + \rho_u(P(0, S, \delta - \delta^*))$.
    Taking the infima with respect to $\delta$ of the both hand sides yields $\pi_u(-Z) = p_0 + \pi_u(0)$ and therefore $p_u(Z) = p_0$.
\end{proof}
\begin{remark}[Delta Hedge] \label{remark-deltahedge}
    One can see that the replicating strategy $\delta^*$ in the absence of cost can be approximated by delta hedge, $\delta_t^* \simeq \partial \mathbf{E}[Z | \{S_{t'}\}_{t' < t}] / \partial S_t$,
    if $\Delta t$ is small.  The limit $\Delta t \to 0$ would give rise to the continuous delta hedge and accordingly Black--Scholes price.
\end{remark}

\section{Review of Deep Hedging}  \label{sec-deephedging}

The purpose of this section is to review deep hedging~\citep{buehler2019deep}.
The central idea of deep hedging is to represent a hedging strategy by a deep neural network.
The universal approximation theorem~\citep{hornik1991approximation} implies that a neural network can in principle approximate the optimal strategy arbitrarily well, and sophisticated optimization algorithms train a neural network toward the optimum.
Nonetheless, training in practice has a complexity of action-dependence.

A feed-forward neural network comprises layers of neurons with nonlinear activation.
More formally, with $n_0, n_1, \dots, n_L$ being the number of neurons in each layer and $W_l$ being an affine transformation $\mathbf{R}^{n_{l - 1}} \to \mathbf{R}^{n_{l}}$,
a neural network $\NN: \mathbf{R}^{n_0} \to \mathbf{R}^{n_L}$ is given by
\begin{align}
    \NN
        & = W_L \circ \sigma \circ W_{L - 1} \circ \cdots \circ \sigma \circ W_1 \,.
\end{align}
Here $\sigma$ is an activation function applied elementwise.

Deep hedging algorithm represents a hedging strategy $\delta$ by a neural network.
The architecture proposed in \cite{buehler2019deep} uses the current hedge ratio as an input to a neural network.
Namely, the hedging strategy is given by the following semi-recurrent neural network.
\begin{align}
    \delta_{t_{i + 1}}
        & = \NN(I_{t_i}, \delta_{t_i}) \,.
    \label{eq-nn-ff}
\end{align}
A schematic diagram is depicted in Figure~\ref{fig-nn-ff}.
Here $I_{t_i}$ denotes a set of relevant information available at $t_i$.
For instance, for a European option of an asset with the price following Markovian process, $I_{t_i} = \{S_{t_i}, t_i\}$ would suffice.
One can expect that the neural network representation~\eqref{eq-nn-ff} with sufficient parameters includes the optimal hedging strategy because the universal approximation theorem~\citep{hornik1991approximation} implies that a neural network can approximate a certain class of functions arbitrarily well.

The neural network representation~\eqref{eq-nn-ff} translates the hedging and pricing problem to the optimization of a set of parameters $\theta$ of $\NN$ toward the infimal convex risk measure~\eqref{eq-pi}.
One numerical approach to obtain an approximate infimum is the gradient descent method, which in its simplest form iteratively improves the parameters as follows.
\begin{align}
    \theta \to \theta - \eta \frac{\partial \rho_u}{\partial \theta} \,,
\end{align}
where $\eta$ is a positive constant.
More refined optimization algorithms such as \href{https://pytorch.org/docs/stable/optim.html?highlight=adam#torch.optim.Adam}{$\mathsf{Adam}$}~\citep{kingma2014adam} have been invented to avoid being trapped in local minima.
After sufficient iterations, one can expect to obtain a close-to-optimal hedging strategy $\delta^*$.
This strategy approximates the optimal convex risk measure~\eqref{eq-pi} as
\begin{align}
    \pi_u(-Z)
        & \simeq \rho_u(P(-Z, S, \delta^*))
\end{align}
and accordingly yields the approximate value of the utility indifference price~\eqref{eq-price}.

Nevertheless, such optimization is easier said than done in practice due to action-dependence.
That is, since the input of the neural network is dependent on $\delta_{t_i}$ as Eq.~\eqref{eq-nn-ff}, a neural network should explore a vast function space to find the optimum.
Indeed, as we experimentally demonstrate in Section~\ref{sec-experiment}, a naive implementation of deep hedging does not converge to the optimal hedging strategy even after a considerable amount of training.

\section{Proposed Architecture: No-Transaction Band Network}  \label{sec-ntb}

Now, we introduce a new neural network architecture for efficient deep hedging: a no-transaction band network.
Section~\ref{subsec-architecture} illustrates the implementation and interpretation of the proposed network.
One can easily build this network by simply adding a layer using \href{https://pytorch.org/docs/stable/generated/torch.clamp.html?highlight=clamp#torch.clamp}{$\clamp$} function which is available in major deep learning libraries.
This network naturally encodes a proper inductive bias for hedging problems and alleviates the complication of action-dependence.
Section~\ref{subsec-rationale} demonstrates that such a strategy is optimal for derivatives with finite delta and gamma.

\subsection{Architecture} \label{subsec-architecture}

The proposed network is as simple as follows.
The neural network outputs a range $[\bl, \bu]$ and the next hedge ratio is obtained by clamping the current hedge ratio into this range.
That is,
\begin{align}
    \begin{aligned}
        (\bl, \bu)
            & = \NN(I_{t_i}) \,,
        \\
        \delta_{t_{i + 1}}
            & = \clamp(\delta_{t_i}, \bl, \bu) \,,
    \end{aligned}
    \label{eq-ntb}
\end{align}
where the \href{https://pytorch.org/docs/stable/generated/torch.clamp.html?highlight=clamp#torch.clamp}{$\clamp$} function is given by
\begin{align}
    \clamp(\delta_{t_i}, \bl, \bu)
        & \equiv
        \begin{cases}
            \bl
                & \text{if $\delta_{t_i} < \bl$} \\
            \delta_{t_i}
                & \text{if $\bl \leq \delta_{t_i} \leq \bu$} \\
            \bu
                & \text{if $\delta_{t_i} > \bu$}
        \end{cases}
        \,.
        \label{eq-clamp}
\end{align}
A schematic diagram is depicted in Figure~\ref{fig-nn-ntb}.
In this strategy, a hedger always maintains their hedge ratio in the range $[\bl, \bu]$ while they never transact inside this range.
This range is called no-transaction band in the literature~\citep{davis1993european,whalley1997asymptotic} and thus we dub this architecture ``\textit{no-transaction band network}''.

The rationale for the advantage of our network is twofold:
First, since the neural network in Eq.~\eqref{eq-ntb} does not need the current hedge ratio $\delta_{t_i}$ as an input, we can circumvent the complexity of action-dependence.
Second, the architecture~\eqref{eq-ntb} incorporates a proper inductive bias because such a strategy is proved to be optimal to hedge derivatives with finite delta and gamma.
Along with the proof given shortly in Section~\ref{subsec-rationale}, we can present an intuitive explanation of optimality:
A hedger should refrain from rebalancing unless the hedge ratio exceeds some permissible range, because, even if the price deviates slightly, at the next moment the price may revert to its earlier value and the transaction cost may be wasted.

As we clarify in the next section~\ref{subsec-rationale}, we believe that the no-transaction band network can give the optimal strategy for any utility and derivatives, as far as utility and payoff function are sufficiently smooth.

We leave two tips to train the no-transaction band network.
First, since the gradient of \href{https://pytorch.org/docs/stable/generated/torch.clamp.html?highlight=clamp#torch.clamp}{$\clamp$} function vanishes in the clamped regions, introducing a small slope in these regions may help it backpropagating losses efficiently.
Second, since the above definition of \href{https://pytorch.org/docs/stable/generated/torch.clamp.html?highlight=clamp#torch.clamp}{$\clamp$}~\eqref{eq-clamp}
(and its implementations in deep learning libraries as well)
does not expect an inverted input $\bl > \bu$, one may wish to give such an input a natural definition such as $\clamp(\delta_{t_i}, \bl, \bu) \equiv (\bl + \bu) / 2$, as we do in our experiments.

\begin{figure}
    \begin{center}
        \begin{minipage}{\linewidth}
            \begin{center}
                \subfigure[Feed-forward network.]{
                    \resizebox*{0.48\linewidth}{!}{\includegraphics{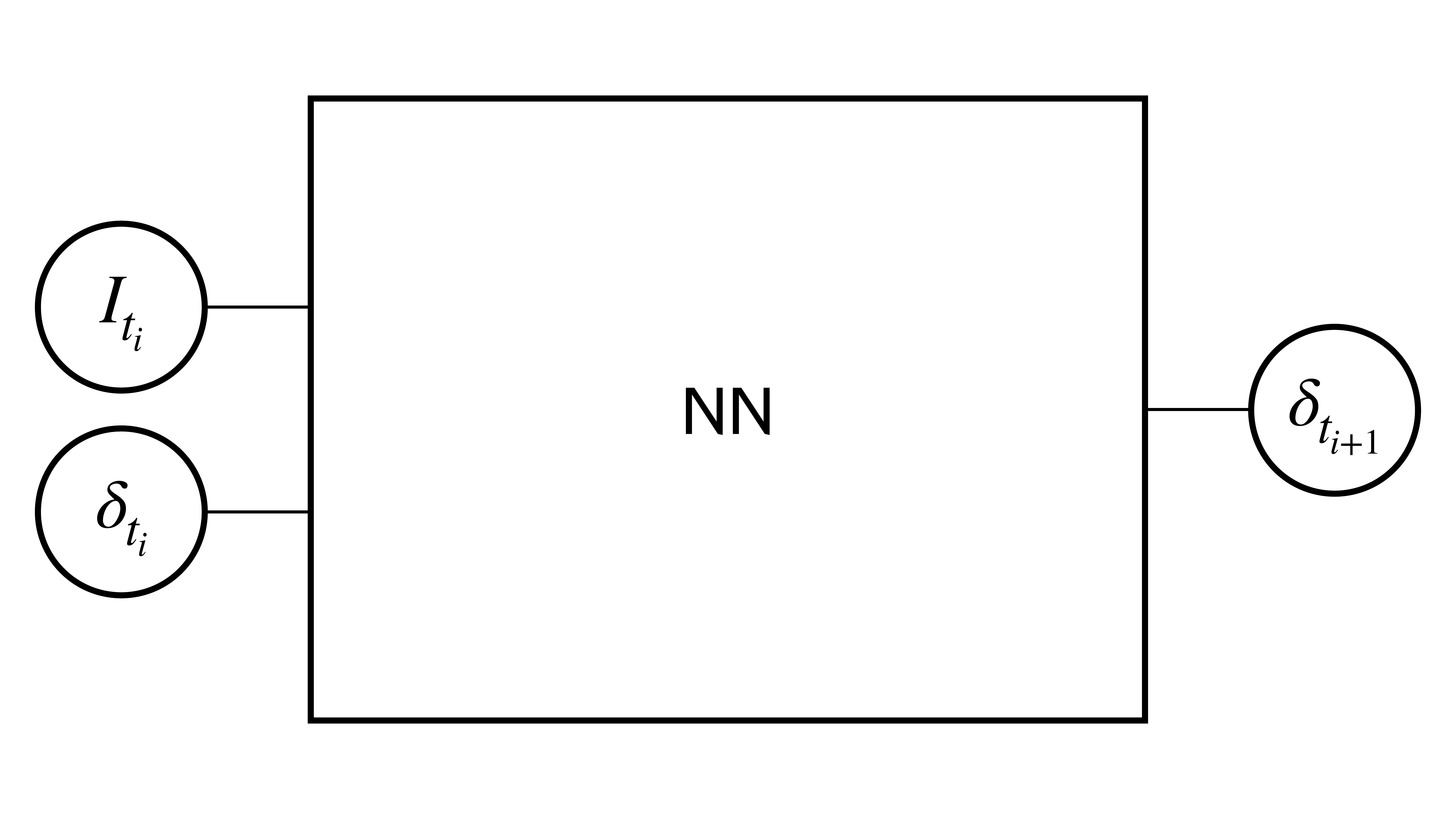}}
                    \label{fig-nn-ff}
                }
                \subfigure[No-transaction band network.]{
                    \resizebox*{0.48\linewidth}{!}{\includegraphics{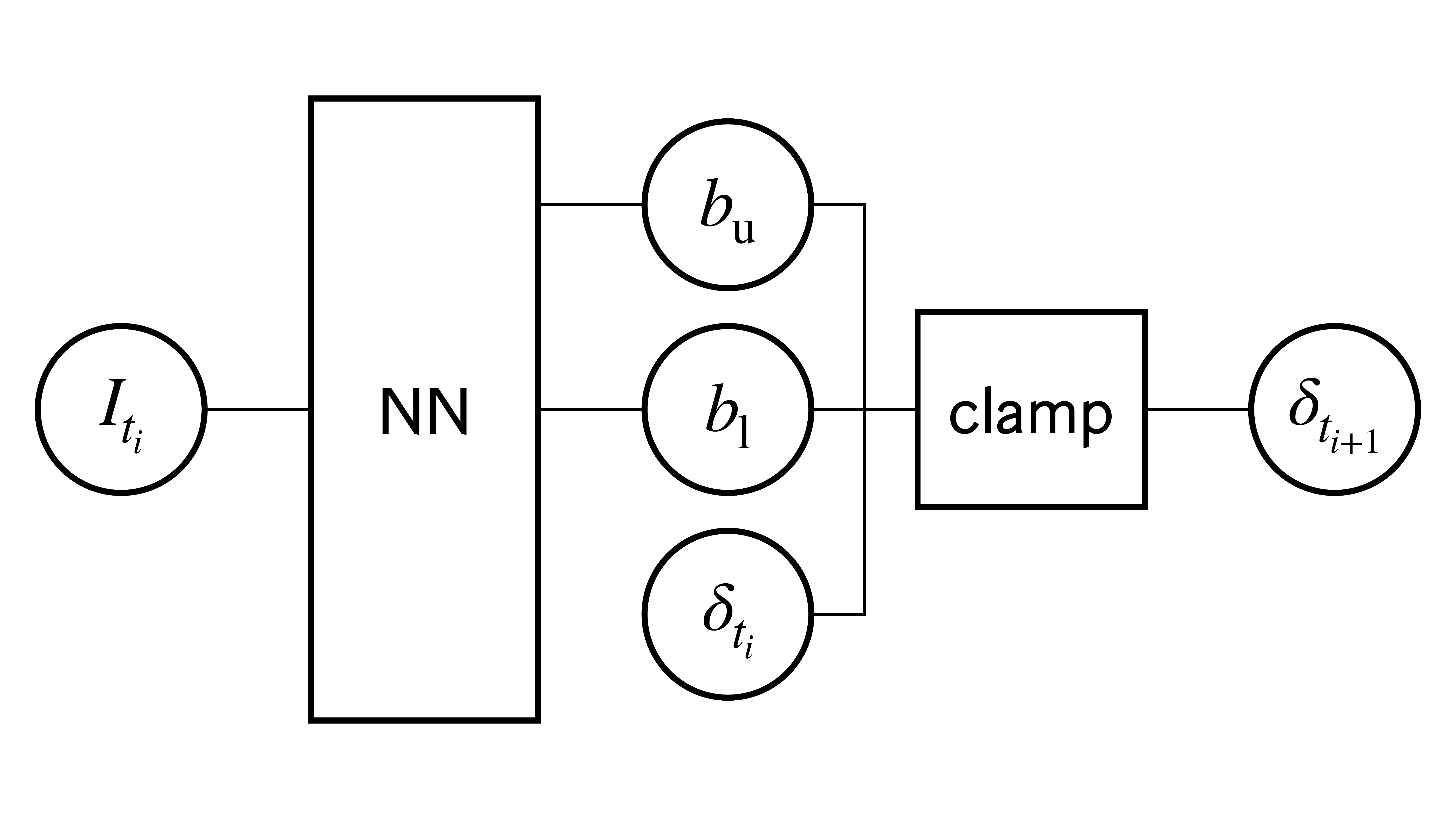}}
                    \label{fig-nn-ntb}
                }
                \caption{
                    Schematic diagrams of an ordinary feed-forward network and the no-transaction band network.
                    Here $I_{t_i}$ denotes a set of relevant information available at $t_i$.
                    The feed-forward network requires the current hedge ratio $\delta_{t_i}$ as an input, whereas the no-transaction band network does not.
                    The feed-forward network directly outputs the next hedge ratio $\delta_{t_{i + 1}}$.
                    On the other hand, the no-transaction band network computes a range $[\bl, \bu]$ by a neural network and then determines the next hedge ratio by clamping $\delta_{t_i}$ into this range.
                }
            \end{center}
        \end{minipage}
    \end{center}
\end{figure}

\subsection{Optimality of No-Transaction Band Strategy}  \label{subsec-rationale}

Let us complete the rationale of our network by demonstrating the optimality of the no-transaction band strategy.
The main statement is Proposition~\ref{proposition-optimality}, which extends the argument of \cite{davis1993european} to more general utility and path-dependent derivatives.

To this end, we define a loss function,
\begin{align}
    \ell_{t_i}(-Z; \{S_{t_j}\}_{t_j \leq t_i}, \delta_{t_i}, C_{t_i}, w)
        & = \inf_{\{\Delta \delta_{t_j}\}_{t_j \geq t_i}}
            \mathbf{E} \left[
                -u(P(-Z, S, \delta) + w)
                \middle|
                \{S_{t_j}\}_{t_j \leq t_i},
                \delta_{t_i},
                C_{t_i}
            \right] \,,
    \label{eq-loss}
\end{align}
where $u$ stands for a twice-differentiable utility function.
One can readily see that the minimizer of the loss function~\eqref{eq-loss} also optimizes the convex risk measure~\eqref{eq-rho}.
The next essential step is to formally write down the Hamilton--Jacobi--Bellman equation for the problem~\eqref{eq-loss}.
For notational convenience, we denote the numbers of shares bought and sold at $t_i$ by $\buy_{t_i} \Delta {t_i} \geq 0$ and $\sell_{t_i} \Delta {t_i} \geq 0$.
Then the desired equation for small $\Delta t_i$ reads~\citep{davis1993european}
\begin{align}
    & \inf_{\buy_t, \sell_t}
        \bbr{
            \pbr{
                \frac{\partial \ell_t}{\partial \delta_t} + c \frac{\partial \ell_t}{\partial C_t}
            } S_t \buy_t
            +
            \pbr{
                - \frac{\partial \ell_t}{\partial \delta_t} + c \frac{\partial \ell_t}{\partial C_t}
            } S_t \sell_t
        }
        + \frac{\partial \ell_t}{\partial t}
        + \frac 1 2 \sigma^2 S_t^2 \frac{\partial^2 {\ell_t}}{\partial S_t^2}
    = 0 \,,
    \label{eq-bellman}
\end{align}
where $t = t_i$ and we neglect higher-order contributions with respect to $\Delta t$.
An assumption behind Eq.~\eqref{eq-bellman} is that the derivative $Z$ has a sufficiently smooth payoff so that the loss function is twice differentiable since otherwise, the higher-order terms are no longer negligible.

Now that Eq.~\eqref{eq-bellman} has boiled our problem down to local optimization, we are ready to prove the main proposition.
Although we limit ourselves to a driftless asset under zero risk-free-rate for brevity, this limitation can be relaxed as in \cite{davis1993european}.
Also, our argument can be extended to the case in which each transaction cost incurs a fixed transaction cost, based on the arguments in \cite{whalley1999optimal} and \cite{zakamouline2006european}.
\begin{proposition}[Optimality of No-Transaction Band Strategy] \label{proposition-optimality}
    Let $u$ be a twice-differentiable utility function.
    Assume $Z$ is a derivative with a sufficiently smooth payoff so that the loss function~\eqref{eq-loss} is twice differentiable with respect to $S_t$.
    Then, the optimal hedge for $Z$ is attained by the no-transaction band strategy.
    Namely, the minimizer $\delta^*$ of the loss function~\eqref{eq-loss} is given by $\delta^*_{t_{i + 1}} = \clamp(\delta^*_{t_i}, \bl, \bu)$ where $\bl$ and $\bu$ are functions of $t$ and $\{S_{t'}\}_{t' \leq t}$ satisfying $\bl \leq \bu$.
\end{proposition}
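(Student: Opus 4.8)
The plan is to analyze the Hamilton--Jacobi--Bellman equation~\eqref{eq-bellman} as a constrained linear program in the control variables $\buy_t, \sell_t$ at each fixed state $(t, \{S_{t'}\}_{t' \le t}, \delta_t, C_t)$, and to read off the structure of the optimal control from the sign of the coefficients multiplying $\buy_t$ and $\sell_t$. Since $\buy_t, \sell_t \ge 0$ and the objective inside the infimum is linear in them, the infimum is either $0$ (no trade) or $-\infty$ unless the coefficients are nonnegative; finiteness of $\ell_t$ therefore forces the admissible region to be
\begin{align}
    \frac{\partial \ell_t}{\partial \delta_t} + c \frac{\partial \ell_t}{\partial C_t} \ge 0
    \quad\text{and}\quad
    - \frac{\partial \ell_t}{\partial \delta_t} + c \frac{\partial \ell_t}{\partial C_t} \ge 0 \,,
    \label{eq-plan-cone}
\end{align}
i.e.\ $\left| \partial \ell_t / \partial \delta_t \right| \le c\, \partial \ell_t / \partial C_t$, and whenever a strict inequality holds in one of these, the corresponding trading rate must vanish. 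The first step is thus to make this LP argument precise and to note that $\partial \ell_t / \partial C_t \ge 0$ (more wealth spent on costs is worse), so the band $\{\delta_t : |\partial \ell_t/\partial \delta_t| \le c\,\partial \ell_t/\partial C_t\}$ is where no transaction occurs.

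The second step is to show this no-transaction region, at each fixed $(t, \{S_{t'}\}_{t' \le t}, C_t)$, is an interval $[\bl, \bu]$ in the $\delta_t$ variable. This is exactly where the twice-differentiability hypothesis on the loss function enters and where I expect the main obstacle to lie. The natural route is convexity: one argues that $\ell_t$ is convex in $\delta_t$ (this should follow from the convexity of $P$ in $\delta$, hence of $-u(P+w)$ since $u$ is concave nondecreasing, propagated through the conditional expectation and the infimum over future increments, much as in the proof of Proposition on $\pi_u$ being a convex risk measure). Given convexity of $\ell_t$ in $\delta_t$, the quantity $\partial \ell_t / \partial \delta_t$ is nondecreasing in $\delta_t$, so the sublevel set $\{|\partial \ell_t / \partial \delta_t| \le c\,\partial \ell_t/\partial C_t\}$ — intersection of $\{\partial\ell_t/\partial\delta_t \le c\,\partial\ell_t/\partial C_t\}$ and $\{\partial\ell_t/\partial\delta_t \ge -c\,\partial\ell_t/\partial C_t\}$ — is an interval; its endpoints define $\bl$ and $\bu$, which depend only on $t$ and $\{S_{t'}\}_{t' \le t}$ (the $C_t$-dependence can be suppressed or tracked as in Davis--Panas--Zariphopoulou, since $C_t$ enters $\ell_t$ only additively through the wealth, up to the $\partial/\partial C_t$ normalization). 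One must also check $\bl \le \bu$, which is immediate once the region is shown nonempty (it contains the frictionless delta hedge as an interior point when $c > 0$, heuristically).

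The third step is to translate the pointwise control law back into the stated clamp form. In the regions $\delta_t < \bl$ and $\delta_t > \bu$ the strict inequalities in~\eqref{eq-plan-cone} fail, so one of $\buy_t, \sell_t$ would be driven to $+\infty$; the correct reading (following Davis--Panas--Zariphopoulou's treatment of the singular control) is that an instantaneous jump occurs bringing $\delta_t$ to the nearest boundary of $[\bl, \bu]$, i.e.\ $\delta^*_{t_{i+1}} = \clamp(\delta^*_{t_i}, \bl, \bu)$, while inside the band no trade occurs so $\delta^*_{t_{i+1}} = \delta^*_{t_i}$, which is also $\clamp(\delta^*_{t_i}, \bl, \bu)$. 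Finally I would note the caveats already flagged in the text: this is a formal argument resting on expanding the Bellman equation to leading order in $\Delta t$, which is legitimate precisely under the smoothness assumption on $Z$'s payoff that makes $\ell_t$ twice differentiable in $S_t$; the driftless, zero-rate, proportional-cost specialization is for brevity and the extension follows Davis--Panas--Zariphopoulou and Whalley--Wilmott. The likely sticking point, to be handled with appropriate care, is rigorously justifying convexity of $\ell_t$ in $\delta_t$ through the nested infimum over future trades and the passage to the continuous-time HJB limit.
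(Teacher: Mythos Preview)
Your proposal is correct and follows essentially the same route as the paper: analyze the linear-in-$(\buy_t,\sell_t)$ infimum in the HJB equation~\eqref{eq-bellman}, use $\partial \ell_t/\partial C_t \ge 0$ to see the three regions are exhaustive, invoke convexity of $\ell_t$ in $\delta_t$ to make the no-trade region an interval $[\bl,\bu]$, and read off the clamp rule from the singular-control behavior at the boundaries. The only cosmetic difference is that the paper first bounds the controls by $k<\infty$, does the three-case analysis with the optimum at $0$ or $k$, and then sends $k\to\infty$, whereas you argue directly from finiteness of $\ell_t$ that the coefficients must be nonnegative; both are standard devices for the same singular-control phenomenon, and the paper likewise merely \emph{assumes} the monotonicity/convexity step (justifying it heuristically ``for small $c$'') rather than proving it, so your flagged sticking point is exactly the gap the paper leaves open too.
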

\begin{proof}
    Bellman's optimization principle implies that the optimal hedge ratio $\delta^*$ is obtained by the minimization problem in Eq.~\eqref{eq-bellman}.  For the moment we restrict $\buy_t, \sell_t \leq k < \infty$.
    Our problem is divided into three cases.
    First, if
    \begin{align}
        \frac{\partial \ell_t}{\partial \delta_t} + c \frac{\partial \ell_t}{\partial C_t}
        < 0 \,,
        \quad
        - \frac{\partial \ell_t}{\partial \delta_t} + c \frac{\partial \ell_t}{\partial C_t}
        > 0 \,,
        \label{case-buy}
    \end{align}
    then the minimum is attained by $\buy_t = k$ and $\sell_t = 0$, namely, the hedger should buy the underlying asset as much as possible.
    Second, if
    \begin{align}
        \frac{\partial \ell_t}{\partial \delta_t} + c \frac{\partial \ell_t}{\partial C_t}
        > 0 \,,
        \quad
        - \frac{\partial \ell_t}{\partial \delta_t} + c \frac{\partial \ell_t}{\partial C_t}
        < 0 \,,
        \label{case-sell}
    \end{align}
    then the minimum is attained by $\buy_t = 0$ and $\sell_t = k$, namely, the hedger should sell the underlying asset as much as possible.
    Third, if
    \begin{align}
        \frac{\partial \ell_t}{\partial \delta_t} + c \frac{\partial \ell_t}{\partial C_t}
        \geq 0 \,,
        \quad
        - \frac{\partial \ell_t}{\partial \delta_t} + c \frac{\partial \ell_t}{\partial C_t}
        \geq 0 \,,
        \label{case-no}
    \end{align}
    then the optimal strategy is to make no transaction, $\buy_t = \sell_t = 0$.
    Considering the inequality
    \begin{align}
        \pbr{
            \frac{\partial \ell_t}{\partial \delta_t} + c \frac{\partial \ell_t}{\partial C_t}
        }
        + \pbr{
            - \frac{\partial \ell_t}{\partial \delta_t} + c \frac{\partial \ell_t}{\partial C_t}
        }
        = 2 c \frac{\partial \ell_t}{\partial C_t} \geq 0 \,,
        \label{eq-coefadd}
    \end{align}
    the above cases are collectively exhaustive.
    Let us assume that the coefficient $\partial \ell_t / \partial \delta_t + c \partial \ell_t / \partial C_t$ is monotone increasing and
    $- \partial \ell_t / \partial \delta_t + c \partial \ell_t / \partial C_t$ is monotone decreasing with respect to $\delta_t$.
    This would hold for small $c$ because the loss function~\eqref{eq-loss} is convex with respect to $\delta_t$.
    Then the equations $\pm \partial \ell_t / \partial \delta + c \partial \ell_t / \partial C = 0$ have only one root each, which we denote by $\delta_t = \bl, \bu$ respectively.
    One can see that $\bl \leq \bu$.
    Taking the limit $k \to \infty$, we find that the optimal strategy is given by
    $\delta_{t + \Delta t} = \delta_{t} + \buy_t \Delta t - \sell_t \Delta t = \clamp(\delta_t, \bl, \bu)$, or equivalently $\delta_{t_{i + 1}} = \clamp(\delta_{t_i}, \bl, \bu)$.
\end{proof}

\begin{remark}[Asymptotically Optimal Strategy for Small Cost] \label{whalleywilmott}
    The optimal no-transaction band for exponential utility and asymptotically small transaction cost is evaluated by \cite{whalley1997asymptotic}.
    The result reads
    \begin{align}
        \Delta - \bl
            = \bu - \Delta
            = \pbr{\frac{3 c S_t \Gamma^2}{2 \lambda}}^{1 / 3}
            + o(c^{1 / 3}) \,,
        \label{eq-ww}
    \end{align}
    where $\Delta$ and $\Gamma$ are Black--Scholes' delta and gamma of the derivative.
    Although they consider a European call option, we expect that this result applies to other derivatives for which loss function is smooth.
    We will employ this result as a benchmark of the numerical experiments.
\end{remark}

\section{Experimental Results}  \label{sec-experiment}

We now experimentally demonstrate the efficiency of the proposed architecture.
The attained utility, price, and training speed are compared to an ordinary feed-forward network.
The results show that the no-transaction band network attains faster convergence to higher utility and cheaper prices for a wide range of transaction costs.

Throughout the experiments, we consider the European option and lookback option defined in Section~\ref{subsec-market}.
We consider a European option in Section~\ref{subsec-eu} and a lookback option in Section~\ref{subsec-lb}.
A European option provides the simplest benchmark whereas a lookback option illustrates that our architecture applies to an exotic option.
The price of their underlying asset follows geometric Brownian motion~\eqref{eq-brown} with the initial price $S_0 = 1$ and the volatility $\sigma = 20 \, \%$.
Maturity of the derivatives is $T = 30 / 365$ and time steps are given by $t = 0, 1 / 365, \dots, 30 / 365$.
We compute the prices of the derivatives under 22 transaction costs, $c = 0$ and $c = \exp(k)$ with $k = -10.00, -9.75, \dots, -5.00$.

We compare the following three hedging methods.
(i) \textit{No-transaction band network}:
The architecture is described in Eq.~\eqref{eq-ntb} and Figure.~\ref{fig-nn-ntb}.
We use a four-layer feed-forward neural network with 32 hidden neurons in each layer.
We employ \href{https://pytorch.org/docs/stable/generated/torch.nn.ReLU.html#torch.nn.ReLU}{$\mathsf{ReLU}$} as an activation function.
Two outputs of the neural network are applied with \href{https://pytorch.org/docs/stable/generated/torch.nn.LeakyReLU.html#torch.nn.LeakyReLU}{$\mathsf{LeakyReLU}$} with negative slope $0.01$
and then added and subtracted to the Black--Scholes' delta to get $\bu$ and $\bl$, respectively.
Input features $I_{t_i}$ depend on derivatives and are explained later.
(ii) \textit{Feed-forward neural network}:
The architecture is described in Eq.~\eqref{eq-nn-ff} and Figure.~\ref{fig-nn-ff}.
The neural network comprises $4 \times 32$ neurons and \href{https://pytorch.org/docs/stable/generated/torch.nn.ReLU.html#torch.nn.ReLU}{$\mathsf{ReLU}$} activations.
\href{https://pytorch.org/docs/stable/generated/torch.nn.Tanh.html#torch.nn.Tanh}{$\mathsf{Tanh}$} is applied to the output and then added to the Black--Scholes' delta to get the hedge ratio $\delta_{t_i}$.
(iii) \textit{Asymptotic solution for small transaction cost}:
Details are described in Remark~\ref{whalleywilmott}.
This method is a benchmark for small $c$ because it is supposed to give the optimal hedge under the limits $c \to 0$ and $\Delta t \to 0$.

Neural networks are trained as follows.
In each simulation, we generate 50,000 Monte Carlo paths of the asset prices.
The loss function is computed as the negative of an expected exponential utility with $\lambda = 1$.
We employ the \href{https://pytorch.org/docs/stable/optim.html?highlight=adam#torch.optim.Adam}{$\mathsf{Adam}$}~\citep{kingma2014adam}
optimizer with the parameters recommended in the original paper.
We carry out 1,000 iterations of Monte Carlo simulations for each tuple of option, method, and transaction cost.
We run all the experiments using \href{https://pytorch.org/}{PyTorch}~\citep{NEURIPS2019_9015} and will release the codes\footnote{GitHub: \href{https://github.com/pfnet-research/notransactionbandnetwork}{pfnet-research/NoTransactionBandNetwork}, \href{https://github.com/pfnet-research/pfhedge}{pfnet-research/pfhedge}.}.

\subsection{European Option} \label{subsec-eu}

We consider a European call option with a strike $K = 1$ (See Example~\ref{example-european}).
The set of relevant information at time $t$ comprises log-moneyness, time to maturity, and volatility, $I_{t_i} = \{\log(S_{t_i} / K), T - t_i, \sigma\}$.

As presented in Figure~\ref{fig-eu-utility}, the proposed architecture attains the highest expected utility for a wide range of transaction costs.
It is worth mentioning that the utility of the no-transaction band network ceases to decrease at some value of cost.
This is because the no-transaction band learns ``not to hedge'' when a transaction cost is too expensive.
This advantage is distinctive from the other two methods.

Accordingly, the no-transaction band network allows for cheaper prices.
Figure~\ref{fig-eu-price} shows price spreads $p(c) - p(0)$ attained by each method.
The price at zero cost $p(0)$ is set to the common value obtained by delta-hedge with Black--Scholes' delta (See Remark~\ref{remark-deltahedge}).
One can see that the no-transaction band network attains the cheapest prices for diverse transaction costs.
It is interesting to note that the renowned scaling law, $p(c) - p(0) \propto c^{2/3}$ for $c \to 0$ \citep{whalley1997asymptotic}, is approximately observed for the asymptotic solution.
Also, the asymptotic proximity to this scaling behavior suggests the optimality of the no-transaction band network for small $c$.
See Appendix.~\ref{appendix-table} for the obtained values of expected utility and prices.

Figure~\ref{fig-eu-band} shows the no-transaction band as a function of log-moneyness.
For small transaction costs, the no-transaction band of our network is close to the asymptotic solution but a small difference is observed.
In contrast, when the transaction cost is too expensive, our model outputs an almost vanishing hedge ratio.
In other words, a neural network learns not to hedge to save on transaction costs.

Last but not least, the no-transaction band network has another virtue of fast learning.
The learning histories in Figure~\ref{fig-eu-history} demonstrate that the no-transaction band network reaches its optima in less than 100 simulations, even though the ordinary feed-forward network does not converge in 1,000 simulations.
Varying the learning rate of the optimizer between 0.01 and 0.0001 does not affect the advantage of the no-transaction band network (See Appendix.~\ref{appendix-lr}).

\begin{figure}
    \begin{center}
        \begin{minipage}{\linewidth}
            \begin{center}
                \subfigure[Utility for European option.]{
                    \resizebox*{0.48\linewidth}{!}{
                        \includegraphics{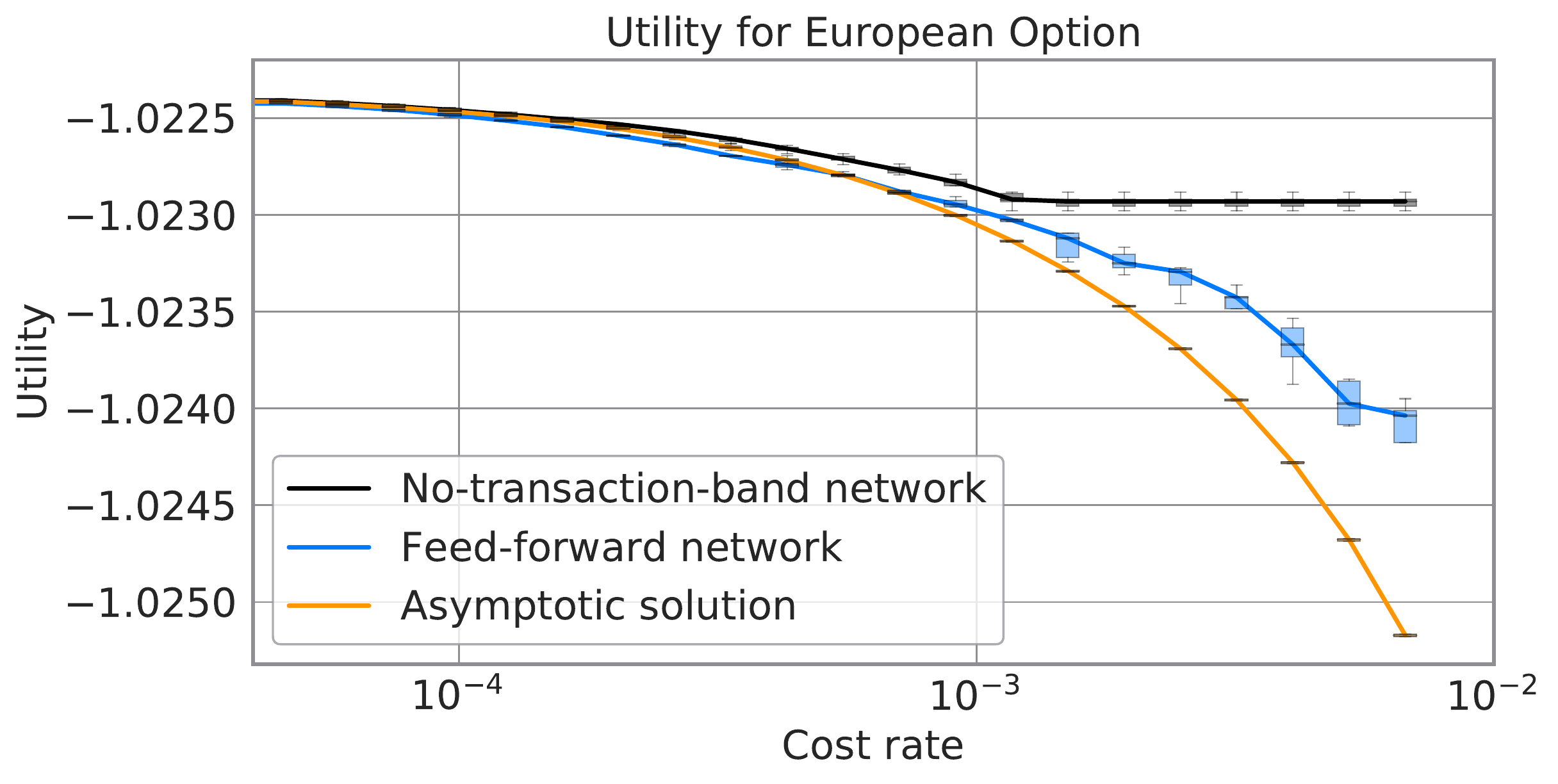}
                    }
                    \label{fig-eu-utility}
                }
                \subfigure[Price spread for European option.]{
                    \resizebox*{0.48\linewidth}{!}{
                        \includegraphics{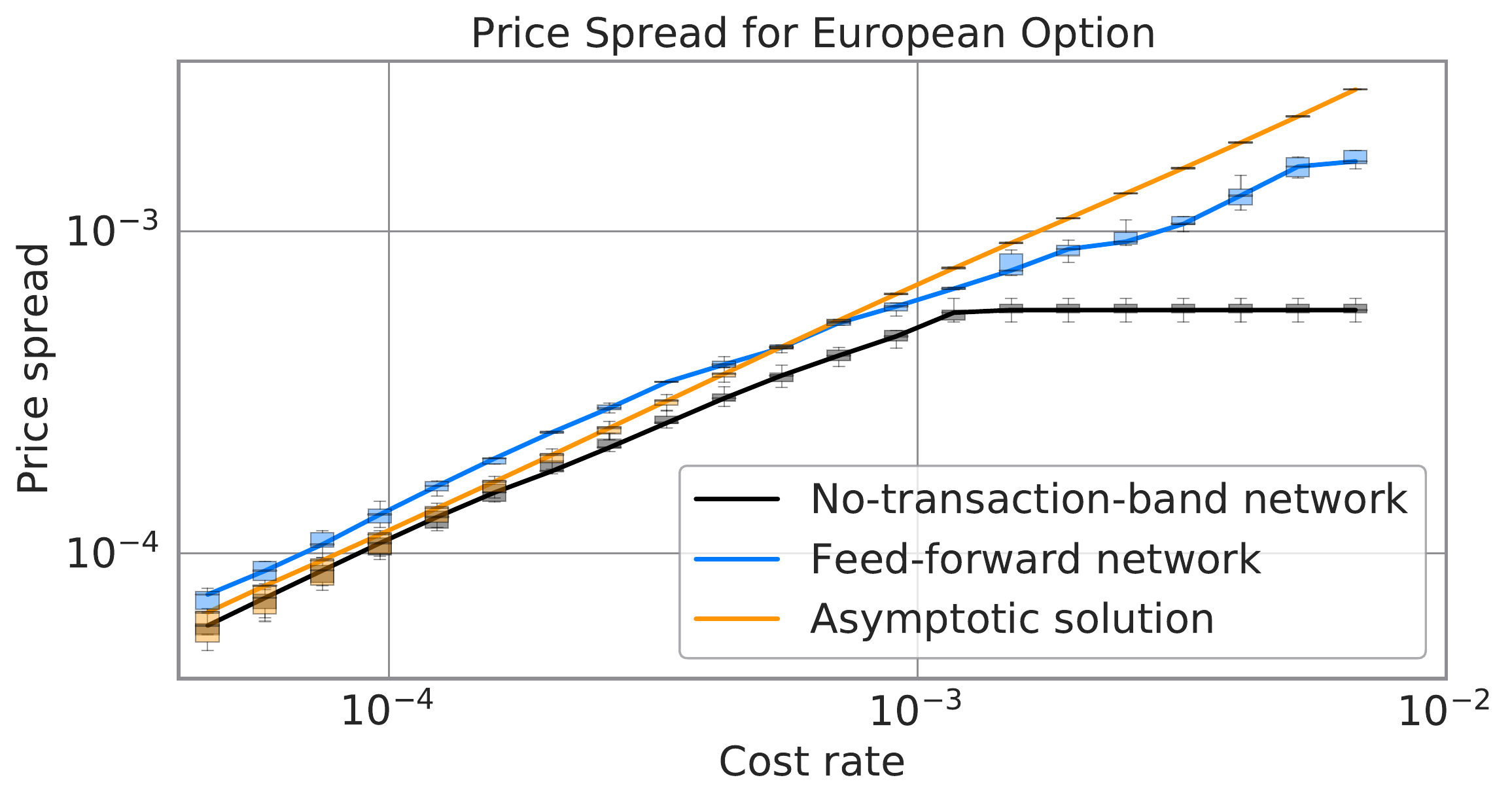}
                    }
                    \label{fig-eu-price}
                }
                \caption{
                    Final expected utility and price attained by different methods for a European option.
                    We take averages of the results from 20 times of Monte Carlo simulations.
                    The error bars are obtained by five experiments with different random seeds for initialization of a neural network and Monte Carlo paths.
                }
            \end{center}
        \end{minipage}
    \end{center}
\end{figure}
\begin{figure}
    \begin{center}
        \begin{minipage}{\linewidth}
            \begin{center}
                \subfigure[No-transaction band for small cost.]{
                    \resizebox*{0.48\linewidth}{!}{
                        \includegraphics{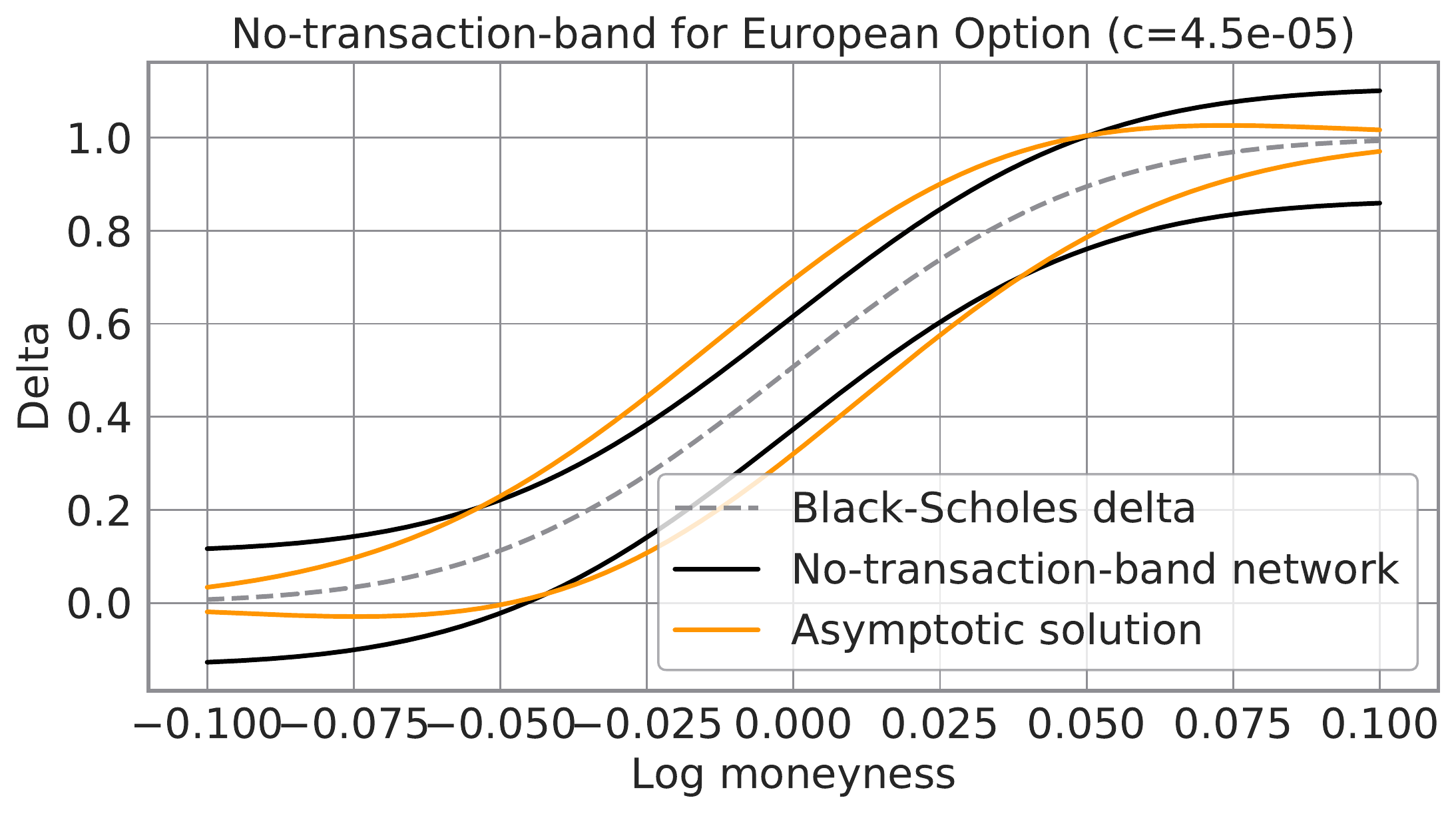}
                    }
                    \label{fig-eu-band-0}
                }
                \subfigure[No-transaction band for large cost.]{
                    \resizebox*{0.48\linewidth}{!}{
                        \includegraphics{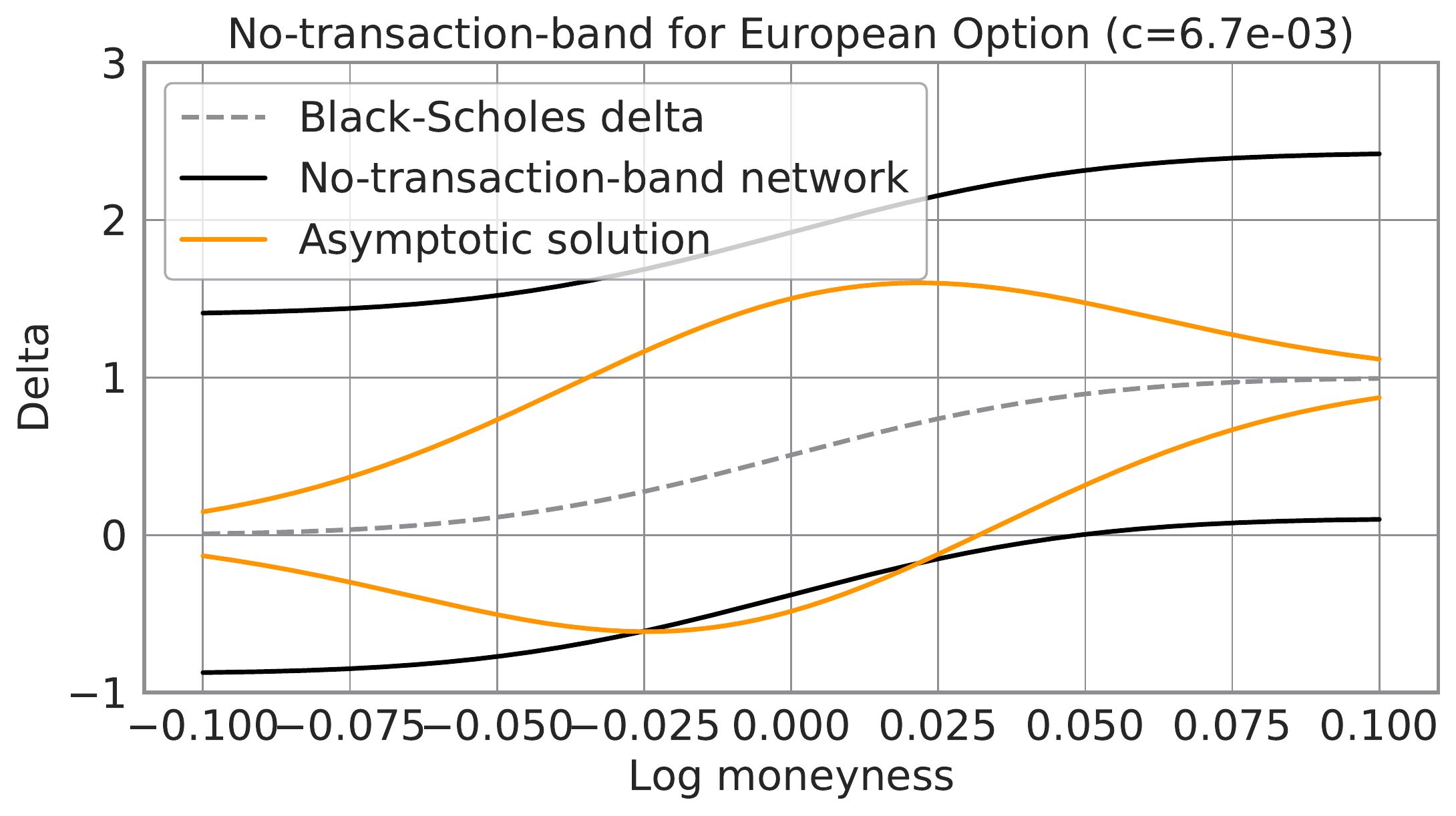}
                    }
                    \label{fig-eu-band-20}
                }
                \caption{
                    No-transaction bands obtained by the no-transaction band network and the asymptotic solution~\eqref{eq-ww} for the European option.
                    Time to maturity is $T - t_i = 15 / 365$.
                }
                \label{fig-eu-band}
            \end{center}
        \end{minipage}
    \end{center}
\end{figure}
\begin{figure}
    \begin{center}
        \begin{minipage}{\linewidth}
            \begin{center}
                \subfigure[Learning history for small cost.]{
                    \resizebox*{0.48\linewidth}{!}{\includegraphics{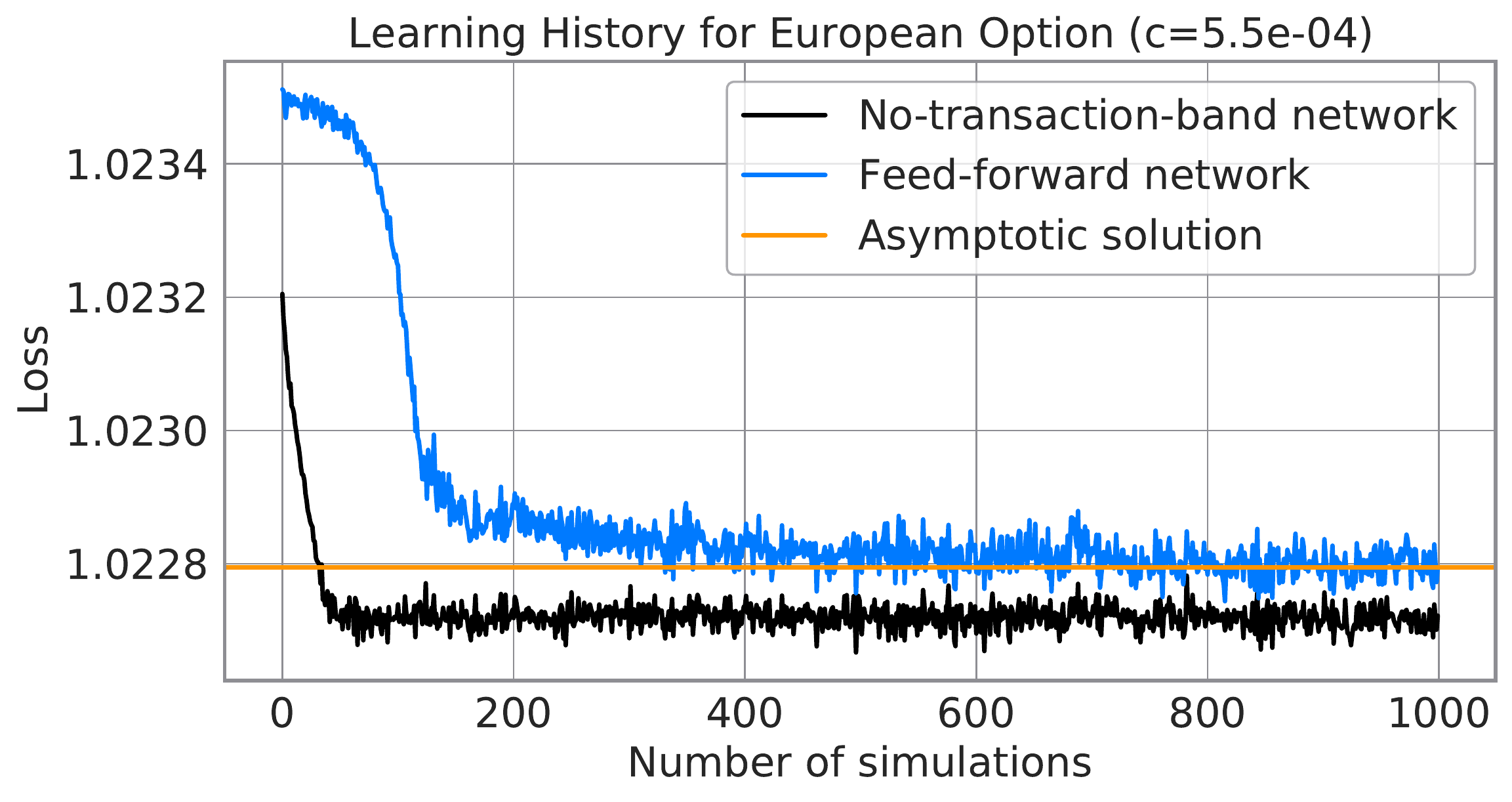}}
                    \label{fig-eu-history-10}
                }
                \subfigure[Learning history for large cost.]{
                    \resizebox*{0.48\linewidth}{!}{\includegraphics{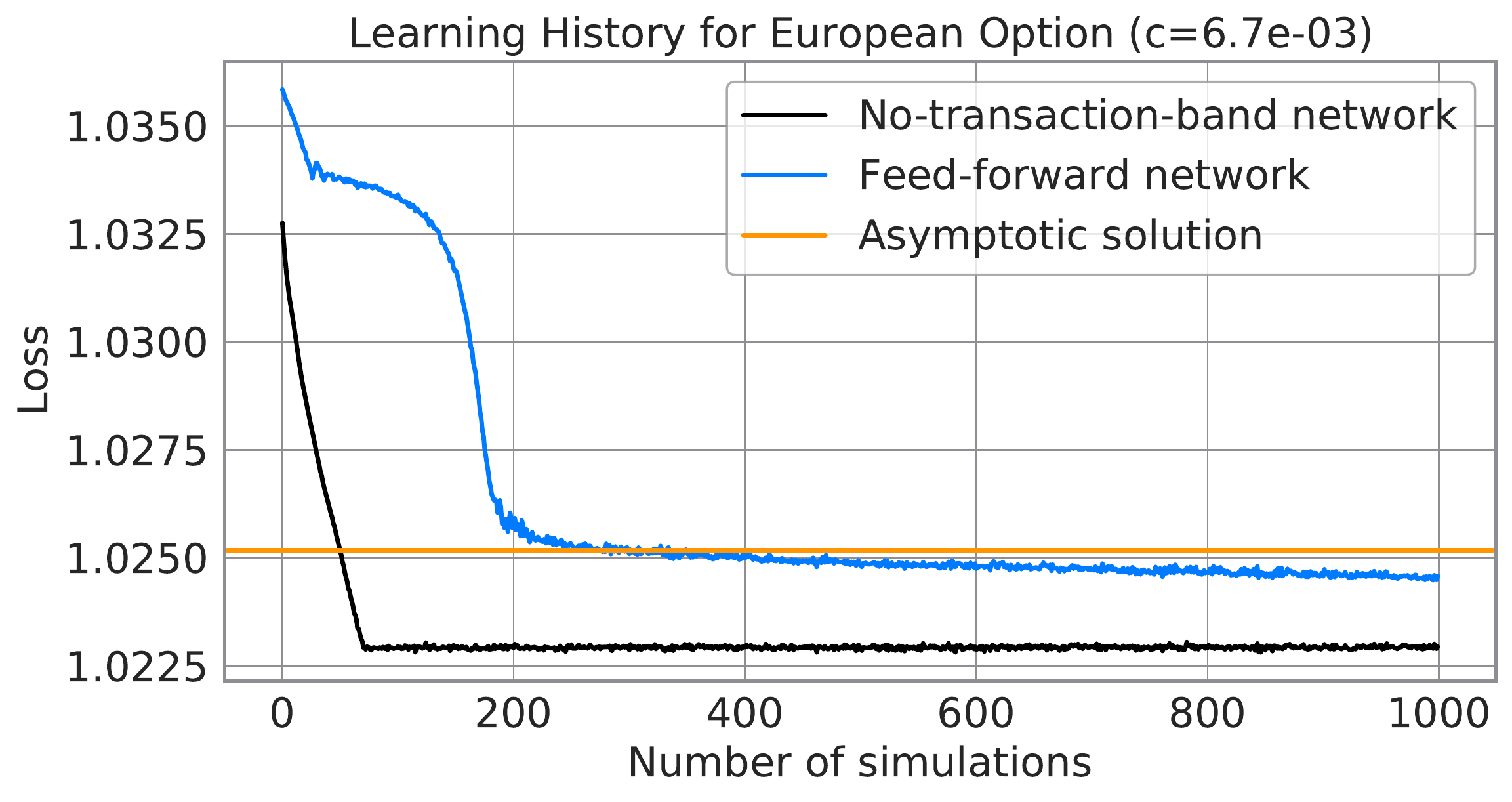}}
                    \label{fig-eu-history-20}
                }
                \caption{
                    Learning histories of different methods for a European option.
                    The horizontal axis stands for the number of Monte Carlo simulations.
                    The vertical axis shows the loss which is the negative of the expectation value of utility.
                    The validation loss is computed as an average of the results from 20 Monte Carlo simulations.
                    The losses for the asymptotic solution is plotted as constants as it does not need training.
                }
                \label{fig-eu-history}
            \end{center}
        \end{minipage}
    \end{center}
\end{figure}

\subsection{Lookback Option} \label{subsec-lb}

We consider a lookback option with a fixed strike $K = 1.03$ (See Example~\ref{example-lookback}).
A set of relevant information includes a cumulative maximum of the log-moneyness $M_{t_i} \equiv \max [\{\log(S_{t_j} / K)\}_{t_j \leq t_i}]$ on top of the features used for a European option.
We also employ the asymptotic formula~\eqref{eq-ww} using numerically computed Black--Scholes' delta and gamma of the lookback option.

The no-transaction band network achieves the highest utility as shown in Figure~\ref{fig-lb-utility}, and accordingly the cheapest prices in Figure~\ref{fig-lb-price}.
Again, the utility and price for the no-transaction band do not get worse than some value, beyond which the no-transaction band network learns to keep the naked position to save on transaction costs.
Interestingly, the scaling relation $p(c) - p(0) \propto c^{2 / 3}$ is again approximately realized for the asymptotic solution.
The no-transaction band network is close to this scaling behavior, which implies its optimality for small transaction costs.

Figure~\ref{fig-lb-history} exhibits that the no-transaction band network can be trained much quicker than the ordinary feed-forward network.
Surprisingly, the no-transaction band network achieves its optima as fast as it learns to hedge a European option, even though the lookback option bears further complication of path-dependence and needs more features.
This result suggests the scalability of our method concerning the number of input features.

\begin{figure}
    \begin{center}
        \begin{minipage}{\linewidth}
            \begin{center}
                \subfigure[Utility for lookback option.]{
                    \resizebox*{0.48\linewidth}{!}{\includegraphics{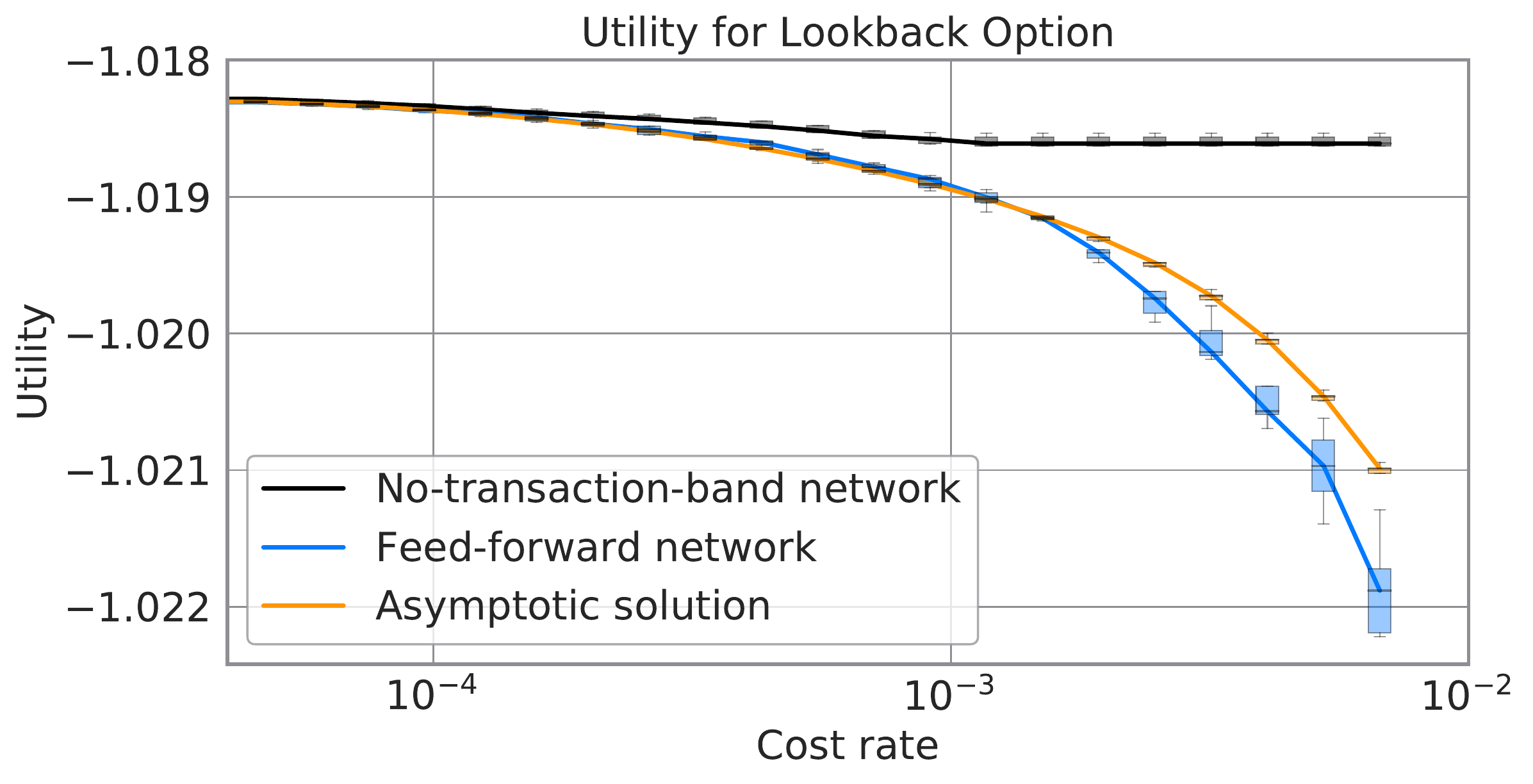}}
                    \label{fig-lb-utility}
                }
                \subfigure[Price spread for lookback option.]{
                    \resizebox*{0.48\linewidth}{!}{\includegraphics{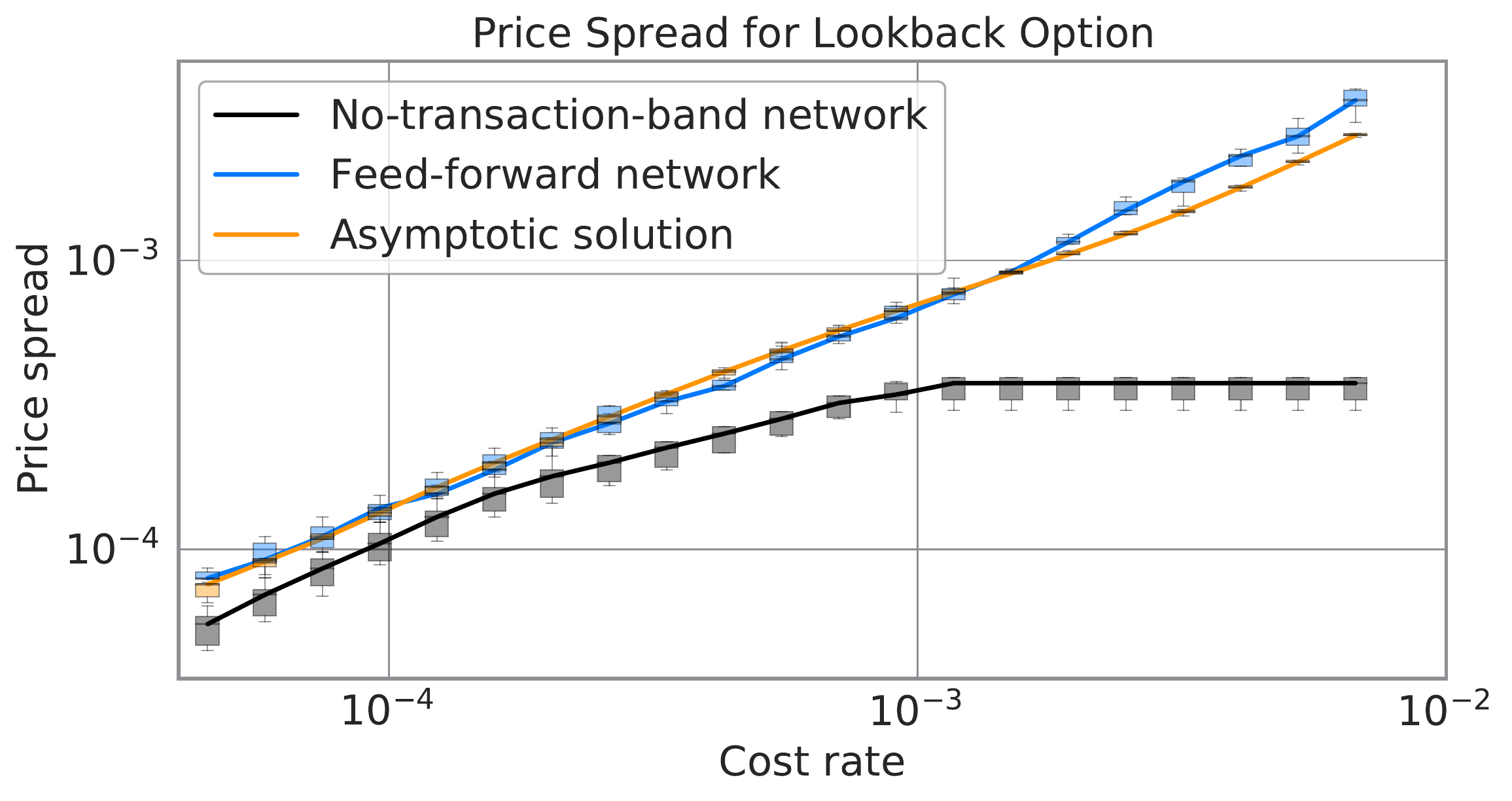}}
                    \label{fig-lb-price}
                }
                \caption{
                    Final expected utility and price attained by different methods for a lookback option.
                    We take averages of the results from 20 times of Monte Carlo simulations.
                    The error bars are obtained by five experiments with different random seeds for initialization of a neural network and Monte Carlo paths.
                }
            \end{center}
        \end{minipage}
    \end{center}
\end{figure}
\begin{figure}
    \begin{center}
        \begin{minipage}{\linewidth}
            \begin{center}
                \subfigure[Learning history for small cost.]{
                    \resizebox*{0.48\linewidth}{!}{\includegraphics{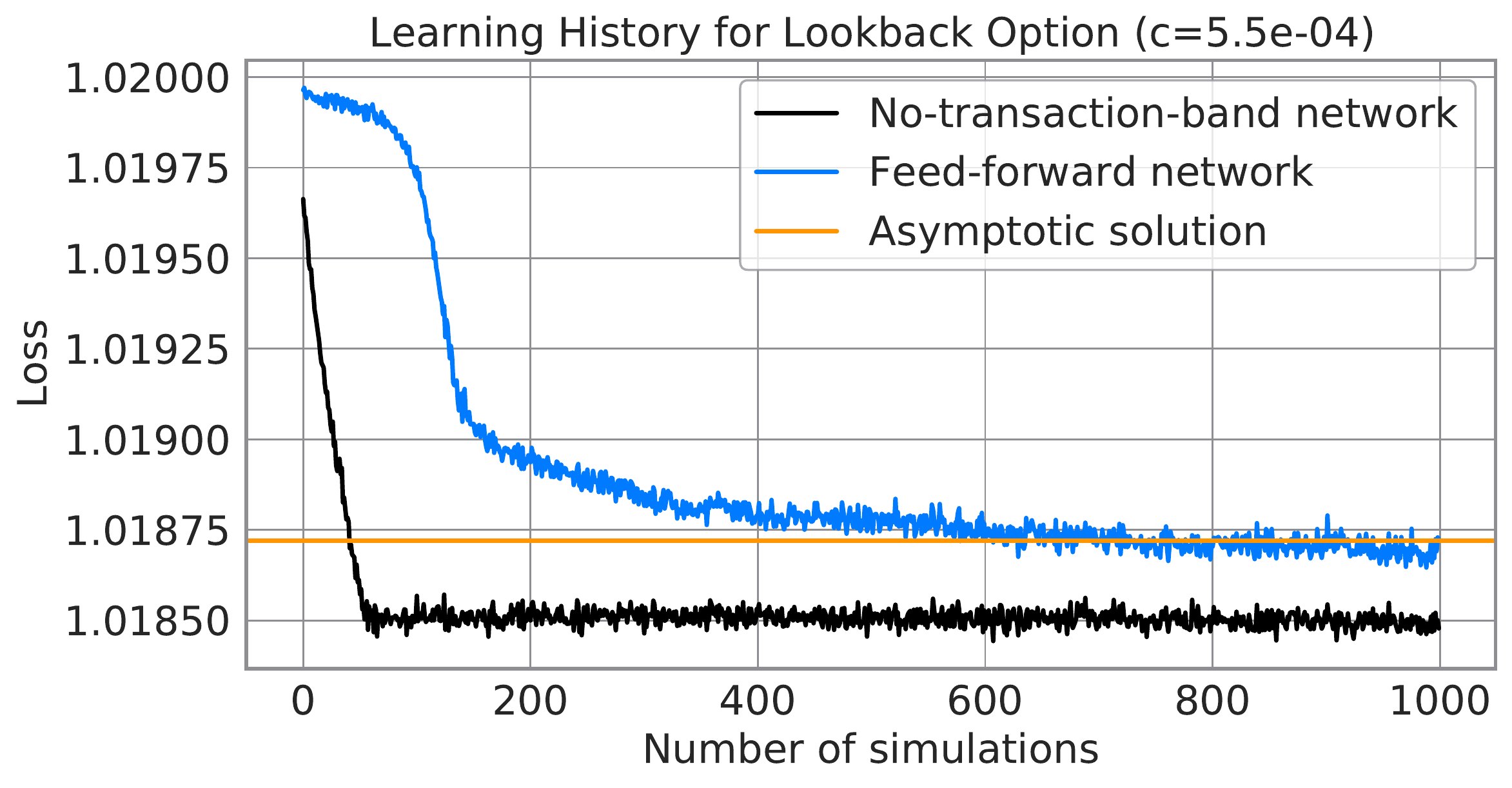}}
                    \label{fig-lb-history-10}
                }
                \subfigure[Learning history for large cost.]{
                    \resizebox*{0.48\linewidth}{!}{\includegraphics{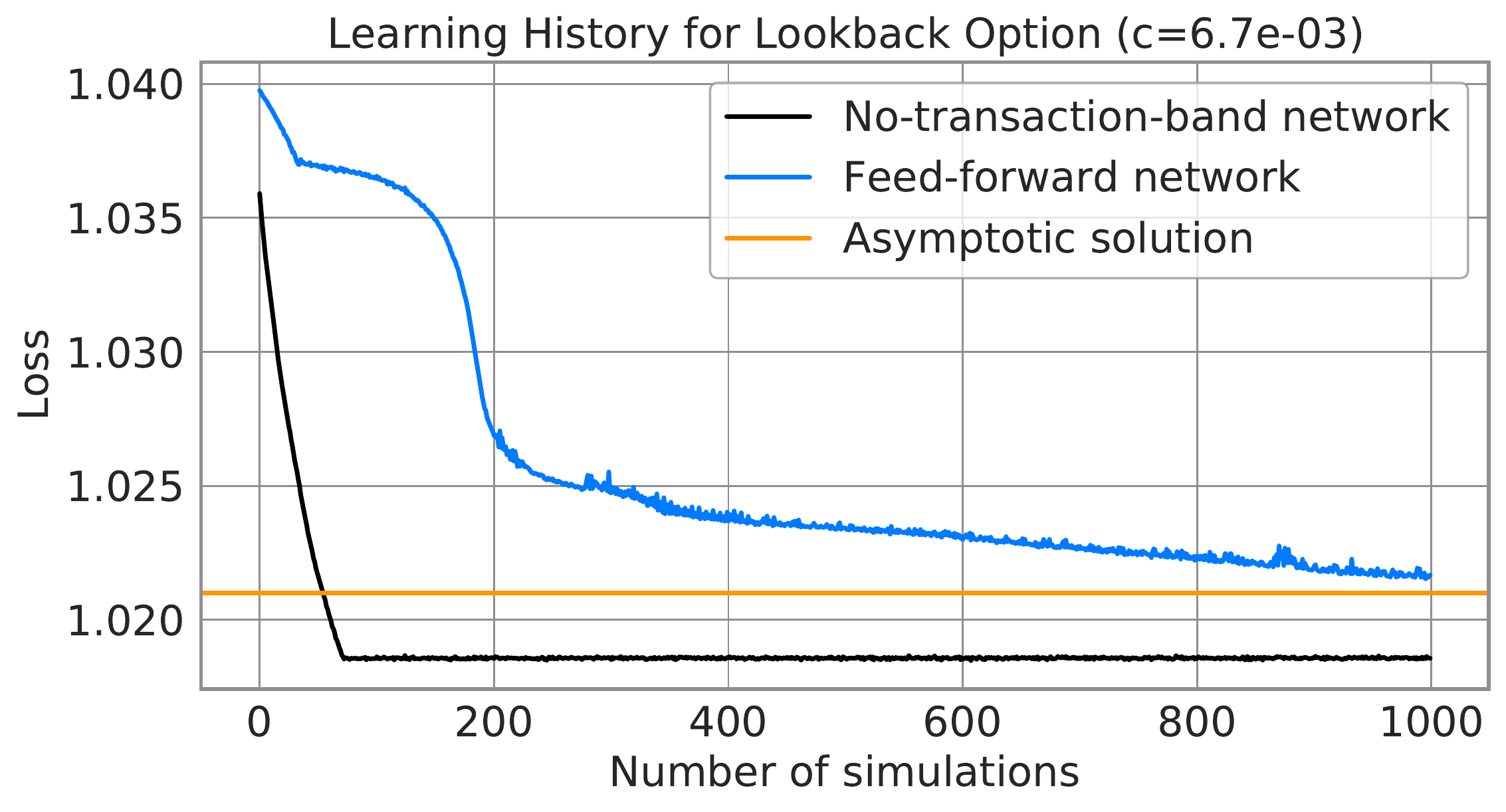}}
                    \label{fig-lb-history-20}
                }
                \caption{
                    Learning histories of different methods for a lookback option.
                    The horizontal axis stands for the number of Monte Carlo simulations and backpropagations.
                    The vertical axis shows the loss which is the negative of the expectation value of utility.
                    The validation loss is computed as an average of the results from 20 Monte Carlo simulations.
                    The losses for the asymptotic solution is plotted as constants as it does not need training.
                }
            \label{fig-lb-history}
            \end{center}
        \end{minipage}
    \end{center}
\end{figure}

\section{Conclusion} \label{sec-conclusion}

We proposed the no-transaction band network, which is a new but simple neural network architecture for efficient deep hedging.
This network properly encodes an inductive bias toward the optimal hedge, which is buttressed by the analytic consideration of the hedging optimization problem.
Our experiments demonstrated the advantage of the no-transaction band network concerning the high utility, cheaper price, and fast learning.

We expect our method to be extended to more general market dynamics, derivatives, and market frictions.
Practical applications require refined market dynamics including volatility smiles and term structures, stochastic volatility models, rough volatility~\citep{horvath2021deep}, and so forth.
Although we assumed Brownian motion for brevity, extensions to these sophisticated models are essential future works.
Moreover, our discussion assumes derivatives to have finite delta and gamma.
Relaxing this assumption helps deep hedging algorithms price more complex but practically relevant derivatives such as binary options and barrier options.
Besides, while we focused on the proportional transaction cost, different transaction cost modeling along with perpetual market impacts~\citep{PhysRevLett.122.108302} deserve further analysis.
Also, pricing under constraints such as liquidity restrictions and risk limits is another practical requirement.

Another challenge is to incorporate multiple hedging instruments.
For instance, as mentioned in \cite{buehler2019deep}, a derivative of an asset with stochastic volatility requires a volatility swap as well as the asset itself for complete hedging.
This extension requires solving a high-dimensional optimization problem and thus seems to be a challenging task.

We believe that our proposal brings deep hedging closer to practical applications.
Precise and fast computation is an essential need for derivative pricing and the no-transaction band network can meet them by adding a simple layer.
We emphasize that the neural network-based pricing is constructive because neural networks directly present hedging strategies, which enables a dealer to quote the computed price.
Further essential work is to apply our method to real market data rather than a synthetic market.

\section*{Acknowledgments}

The authors are grateful to Takuya Shimada for discussions.

\appendix

\section{Learning Histories for Different Learning Rates} \label{appendix-lr}

Figure~\ref{fig-eu-history-lr} shows that the no-transaction band network outperforms the ordinary feed-forward network for different learning rates from $0.0001$ to $0.01$.
It worth noting that a high learning rate achieves fast conversion without compromising the final expected utility.

\begin{figure}
    \begin{center}
        \begin{minipage}{\linewidth}
            \begin{center}
                \subfigure[Learning history with small learning rate for small cost.]{
                    \resizebox*{0.48\linewidth}{!}{
                        \includegraphics{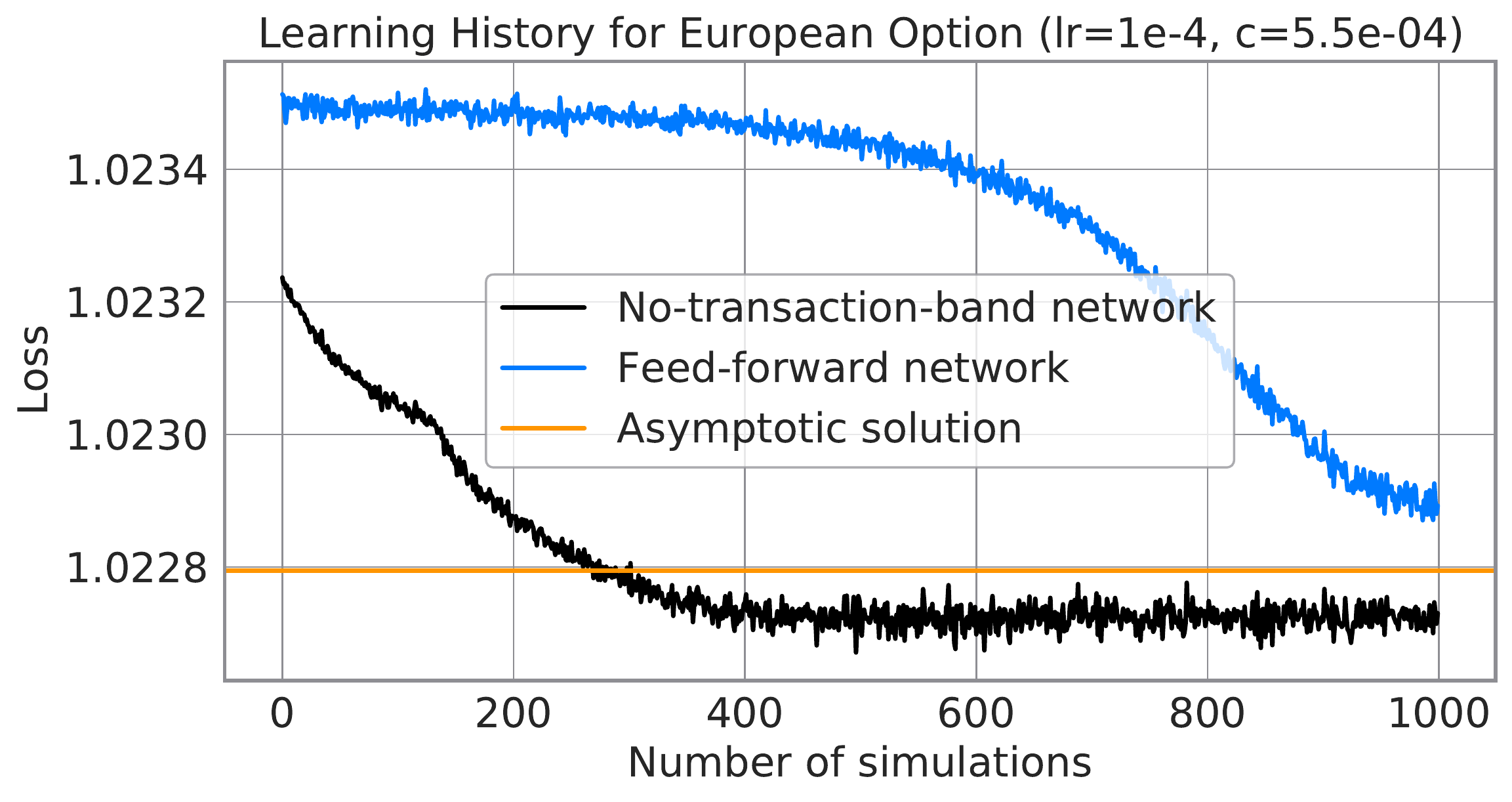}
                    }
                    \label{fig-eu-history-lr4-10}
                }
                \subfigure[Learning history with small learning rate for large cost.]{
                    \resizebox*{0.48\linewidth}{!}{
                        \includegraphics{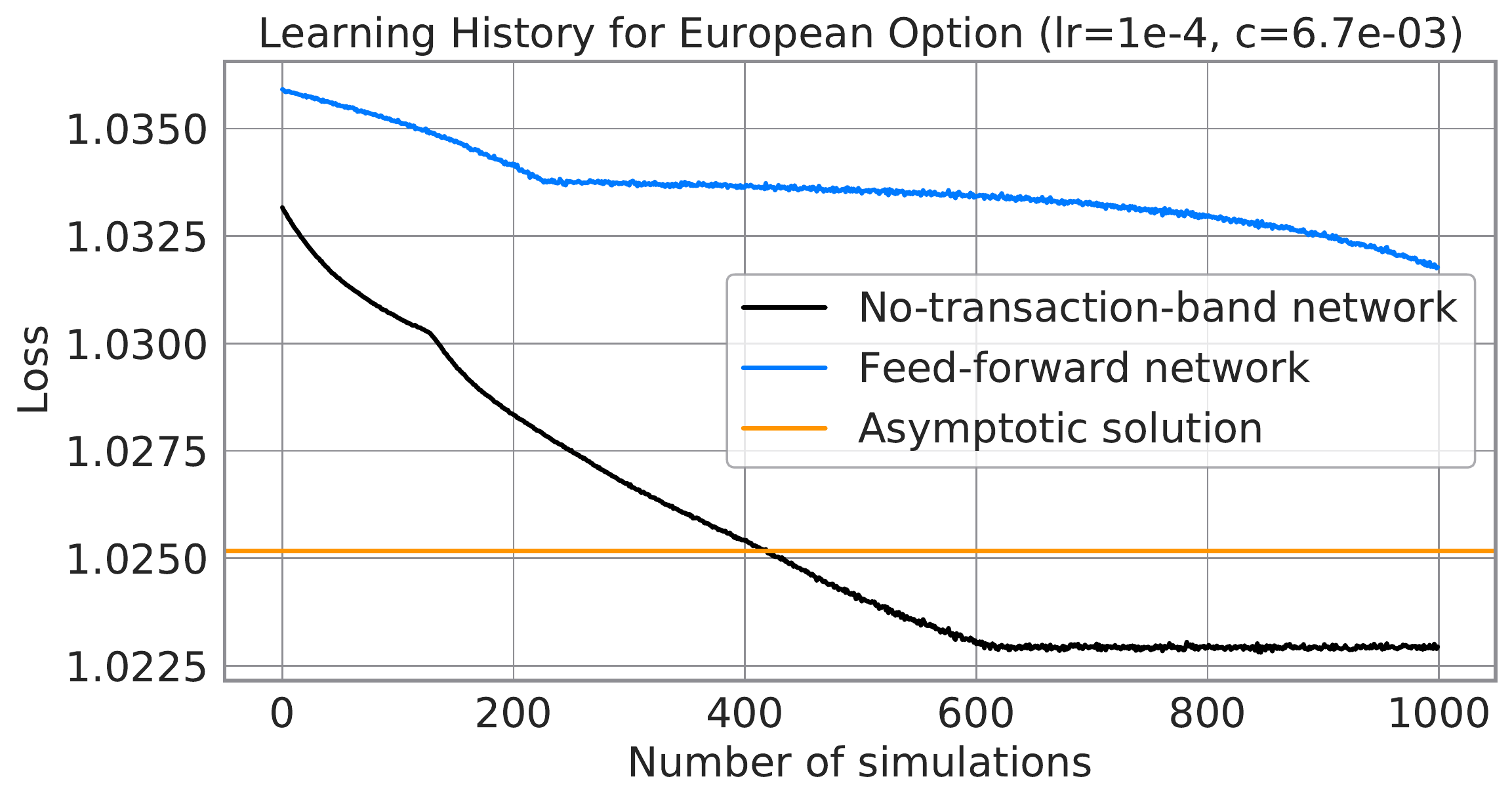}
                    }
                    \label{fig-eu-history-lr4-20}
                }
                \subfigure[Learning history with large learning rate for small cost.]{
                    \resizebox*{0.48\linewidth}{!}{
                        \includegraphics{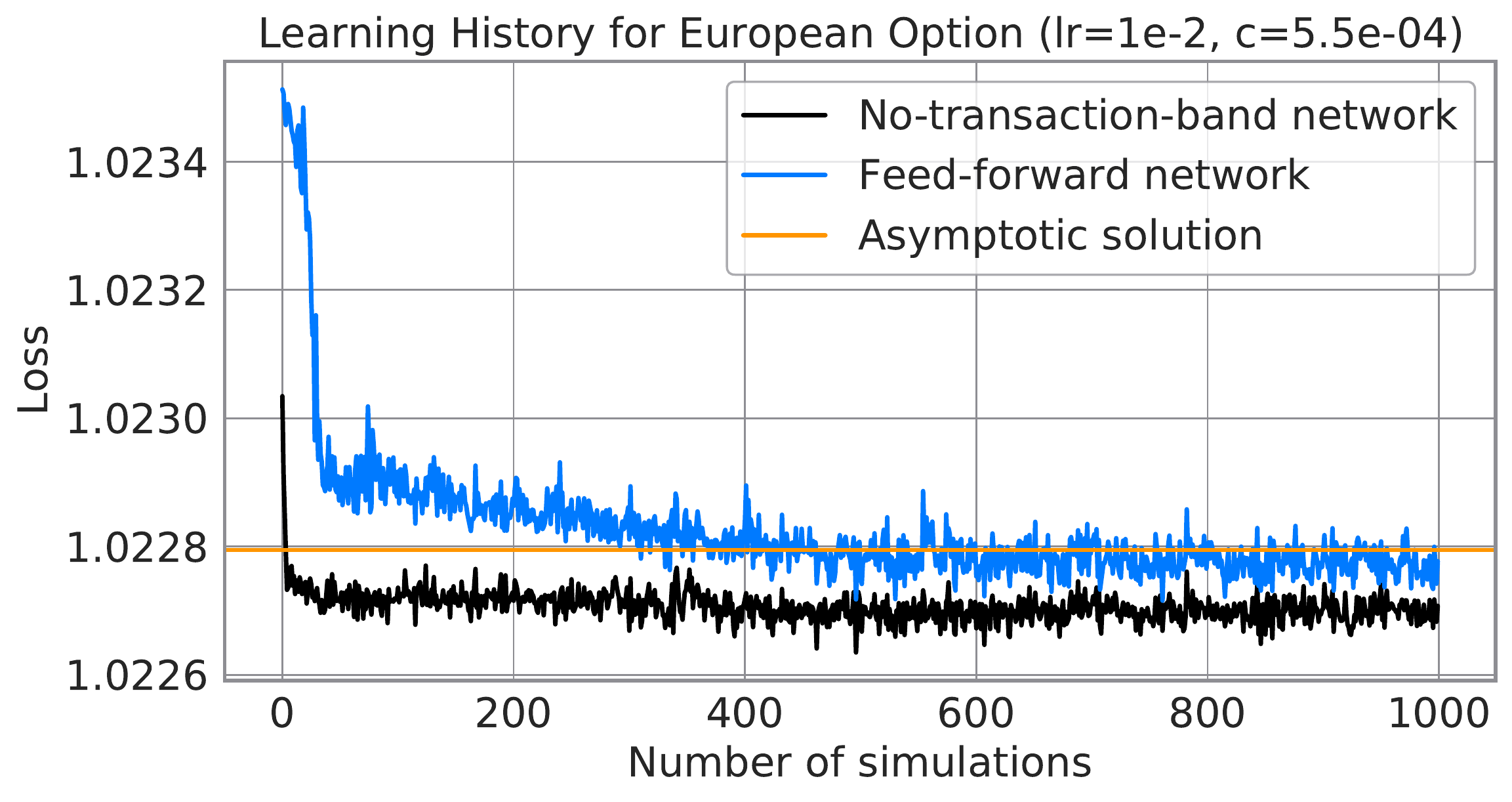}
                    }
                    \label{fig-eu-history-lr2-10}
                }
                \subfigure[Learning history with large learning rate for large cost.]{
                    \resizebox*{0.48\linewidth}{!}{
                        \includegraphics{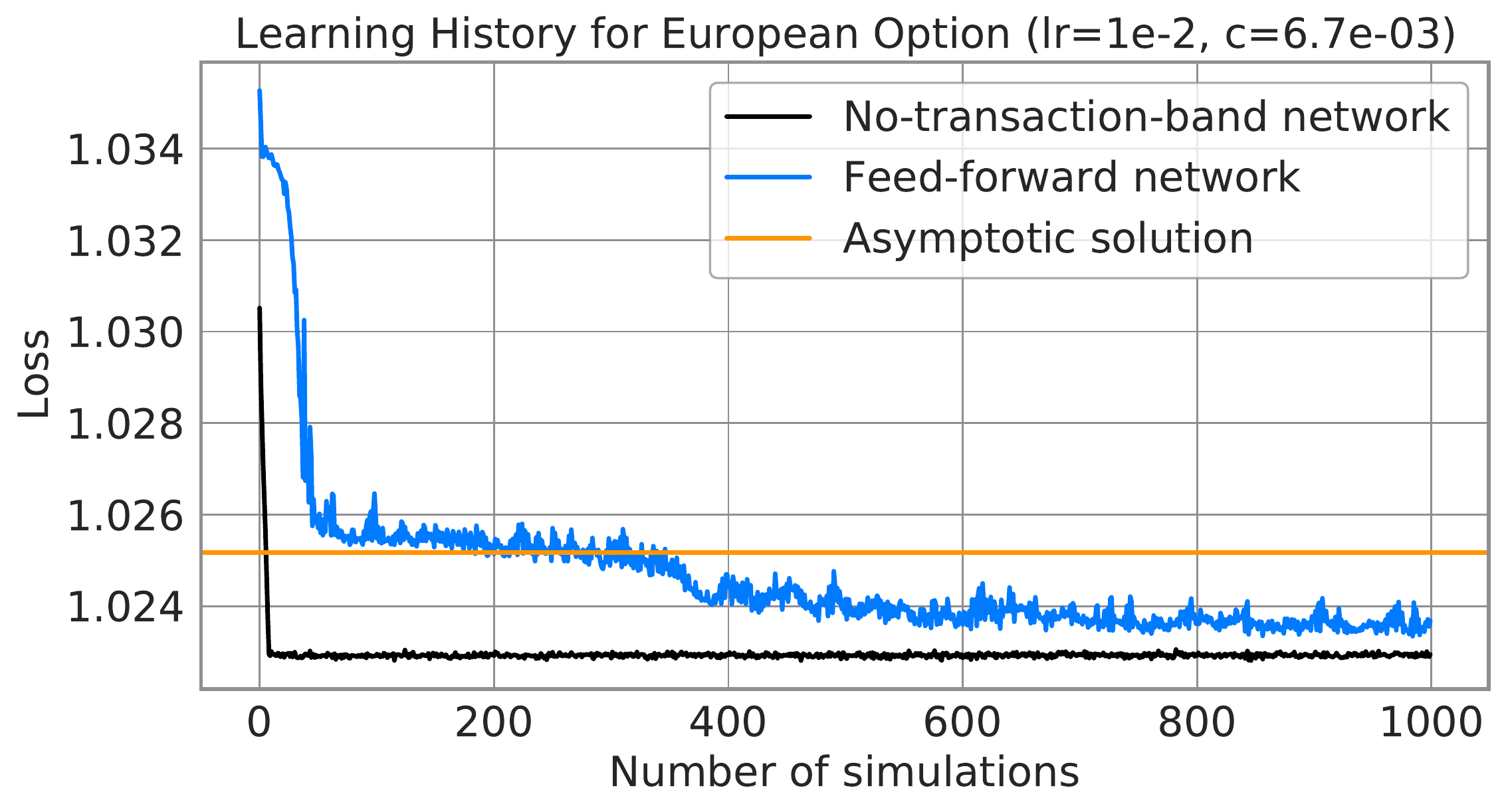}
                    }
                    \label{fig-eu-history-lr2-20}
                }
                \caption{
                    Learning histories of different methods and learning rates for European option.
                    The learning rates are $\mathsf{lr} = 0.01, 0.0001$ while the other setups are the same with the experiments in Sec.~\ref{subsec-eu} where $\mathsf{lr} = 0.001$.
                }
                \label{fig-eu-history-lr}
            \end{center}
        \end{minipage}
    \end{center}
\end{figure}

\section{Tables of Computed Utility and Prices} \label{appendix-table}

Tables~\ref{table-eu-utility}, \ref{table-eu-price}, \ref{table-lb-utility}, \ref{table-lb-price} show the computed utilities and prices for different methods and transaction costs.
\clearpage
\begin{table}
\centering
\caption{Utilities for European option.}
\label{table-eu-utility}
\begin{tabular}{rlll}
\toprule
     Cost &       No-transaction-band &              Feed-forward &                Asymptotic \\
 0.000000 &  $-1.022347 \pm 0.000004$ &  $-1.022348 \pm 0.000005$ &  $-1.022347 \pm 0.000004$ \\
 0.000045 &  $-1.022407 \pm 0.000006$ &  $-1.022422 \pm 0.000005$ &  $-1.022410 \pm 0.000008$ \\
 0.000058 &  $-1.022421 \pm 0.000006$ &  $-1.022438 \pm 0.000007$ &  $-1.022423 \pm 0.000008$ \\
 0.000075 &  $-1.022437 \pm 0.000007$ &  $-1.022459 \pm 0.000007$ &  $-1.022440 \pm 0.000008$ \\
 0.000096 &  $-1.022456 \pm 0.000007$ &  $-1.022483 \pm 0.000009$ &  $-1.022460 \pm 0.000009$ \\
 0.000123 &  $-1.022479 \pm 0.000008$ &  $-1.022512 \pm 0.000006$ &  $-1.022484 \pm 0.000009$ \\
 0.000158 &  $-1.022507 \pm 0.000010$ &  $-1.022546 \pm 0.000004$ &  $-1.022514 \pm 0.000009$ \\
 0.000203 &  $-1.022539 \pm 0.000010$ &  $-1.022590 \pm 0.000008$ &  $-1.022551 \pm 0.000010$ \\
 0.000261 &  $-1.022572 \pm 0.000011$ &  $-1.022638 \pm 0.000007$ &  $-1.022596 \pm 0.000011$ \\
 0.000335 &  $-1.022614 \pm 0.000012$ &  $-1.022697 \pm 0.000007$ &  $-1.022650 \pm 0.000012$ \\
 0.000431 &  $-1.022661 \pm 0.000015$ &  $-1.022746 \pm 0.000012$ &  $-1.022715 \pm 0.000013$ \\
 0.000553 &  $-1.022711 \pm 0.000019$ &  $-1.022799 \pm 0.000020$ &  $-1.022795 \pm 0.000014$ \\
 0.000710 &  $-1.022768 \pm 0.000021$ &  $-1.022876 \pm 0.000012$ &  $-1.022889 \pm 0.000015$ \\
 0.000912 &  $-1.022828 \pm 0.000023$ &  $-1.022953 \pm 0.000040$ &  $-1.023002 \pm 0.000017$ \\
 0.001171 &  $-1.022921 \pm 0.000035$ &  $-1.023046 \pm 0.000040$ &  $-1.023135 \pm 0.000019$ \\
 0.001503 &  $-1.022934 \pm 0.000033$ &  $-1.023155 \pm 0.000063$ &  $-1.023291 \pm 0.000021$ \\
 0.001930 &  $-1.022934 \pm 0.000033$ &  $-1.023241 \pm 0.000050$ &  $-1.023472 \pm 0.000023$ \\
 0.002479 &  $-1.022934 \pm 0.000033$ &  $-1.023334 \pm 0.000070$ &  $-1.023691 \pm 0.000025$ \\
 0.003183 &  $-1.022934 \pm 0.000033$ &  $-1.023500 \pm 0.000156$ &  $-1.023956 \pm 0.000027$ \\
 0.004087 &  $-1.022934 \pm 0.000033$ &  $-1.023679 \pm 0.000119$ &  $-1.024279 \pm 0.000029$ \\
 0.005248 &  $-1.022934 \pm 0.000033$ &  $-1.023971 \pm 0.000105$ &  $-1.024678 \pm 0.000031$ \\
 0.006738 &  $-1.022934 \pm 0.000033$ &  $-1.024159 \pm 0.000243$ &  $-1.025171 \pm 0.000033$ \\
\end{tabular}
\end{table}
\begin{table}
\centering
\caption{Prices for European option.}
\label{table-eu-price}
\begin{tabular}{rlll}
\toprule
     Cost &      No-transaction-band &             Feed-forward &               Asymptotic \\
 0.000000 &  $0.022101 \pm 0.000004$ &  $0.022101 \pm 0.000004$ &  $0.022101 \pm 0.000004$ \\
 0.000045 &  $0.022160 \pm 0.000005$ &  $0.022174 \pm 0.000005$ &  $0.022162 \pm 0.000007$ \\
 0.000058 &  $0.022173 \pm 0.000006$ &  $0.022189 \pm 0.000007$ &  $0.022176 \pm 0.000008$ \\
 0.000075 &  $0.022189 \pm 0.000007$ &  $0.022210 \pm 0.000008$ &  $0.022191 \pm 0.000008$ \\
 0.000096 &  $0.022208 \pm 0.000007$ &  $0.022234 \pm 0.000009$ &  $0.022211 \pm 0.000008$ \\
 0.000123 &  $0.022230 \pm 0.000008$ &  $0.022263 \pm 0.000007$ &  $0.022235 \pm 0.000009$ \\
 0.000158 &  $0.022257 \pm 0.000009$ &  $0.022296 \pm 0.000004$ &  $0.022265 \pm 0.000009$ \\
 0.000203 &  $0.022288 \pm 0.000010$ &  $0.022340 \pm 0.000009$ &  $0.022300 \pm 0.000010$ \\
 0.000261 &  $0.022321 \pm 0.000011$ &  $0.022386 \pm 0.000007$ &  $0.022344 \pm 0.000011$ \\
 0.000335 &  $0.022362 \pm 0.000012$ &  $0.022444 \pm 0.000006$ &  $0.022397 \pm 0.000012$ \\
 0.000431 &  $0.022408 \pm 0.000015$ &  $0.022491 \pm 0.000011$ &  $0.022461 \pm 0.000013$ \\
 0.000553 &  $0.022457 \pm 0.000019$ &  $0.022543 \pm 0.000019$ &  $0.022539 \pm 0.000014$ \\
 0.000710 &  $0.022512 \pm 0.000020$ &  $0.022618 \pm 0.000012$ &  $0.022631 \pm 0.000015$ \\
 0.000912 &  $0.022571 \pm 0.000022$ &  $0.022694 \pm 0.000039$ &  $0.022742 \pm 0.000017$ \\
 0.001171 &  $0.022662 \pm 0.000034$ &  $0.022785 \pm 0.000039$ &  $0.022872 \pm 0.000018$ \\
 0.001503 &  $0.022675 \pm 0.000032$ &  $0.022891 \pm 0.000061$ &  $0.023024 \pm 0.000020$ \\
 0.001930 &  $0.022675 \pm 0.000032$ &  $0.022976 \pm 0.000048$ &  $0.023201 \pm 0.000022$ \\
 0.002479 &  $0.022675 \pm 0.000032$ &  $0.023066 \pm 0.000068$ &  $0.023415 \pm 0.000024$ \\
 0.003183 &  $0.022675 \pm 0.000032$ &  $0.023229 \pm 0.000152$ &  $0.023673 \pm 0.000026$ \\
 0.004087 &  $0.022675 \pm 0.000032$ &  $0.023404 \pm 0.000115$ &  $0.023989 \pm 0.000028$ \\
 0.005248 &  $0.022675 \pm 0.000032$ &  $0.023688 \pm 0.000102$ &  $0.024378 \pm 0.000030$ \\
 0.006738 &  $0.022675 \pm 0.000032$ &  $0.023872 \pm 0.000237$ &  $0.024859 \pm 0.000032$ \\
\end{tabular}
\end{table}
\begin{table}
\centering
\caption{Utilities for lookback option.}
\label{table-lb-utility}
\begin{tabular}{rlll}
\toprule
     Cost &       No-transaction-band &              Feed-forward &                Asymptotic \\
 0.000000 &  $-1.022347 \pm 0.000004$ &  $-1.022348 \pm 0.000005$ &  $-1.022347 \pm 0.000004$ \\
 0.000045 &  $-1.022407 \pm 0.000006$ &  $-1.022422 \pm 0.000005$ &  $-1.022410 \pm 0.000008$ \\
 0.000058 &  $-1.022421 \pm 0.000006$ &  $-1.022438 \pm 0.000007$ &  $-1.022423 \pm 0.000008$ \\
 0.000075 &  $-1.022437 \pm 0.000007$ &  $-1.022459 \pm 0.000007$ &  $-1.022440 \pm 0.000008$ \\
 0.000096 &  $-1.022456 \pm 0.000007$ &  $-1.022483 \pm 0.000009$ &  $-1.022460 \pm 0.000009$ \\
 0.000123 &  $-1.022479 \pm 0.000008$ &  $-1.022512 \pm 0.000006$ &  $-1.022484 \pm 0.000009$ \\
 0.000158 &  $-1.022507 \pm 0.000010$ &  $-1.022546 \pm 0.000004$ &  $-1.022514 \pm 0.000009$ \\
 0.000203 &  $-1.022539 \pm 0.000010$ &  $-1.022590 \pm 0.000008$ &  $-1.022551 \pm 0.000010$ \\
 0.000261 &  $-1.022572 \pm 0.000011$ &  $-1.022638 \pm 0.000007$ &  $-1.022596 \pm 0.000011$ \\
 0.000335 &  $-1.022614 \pm 0.000012$ &  $-1.022697 \pm 0.000007$ &  $-1.022650 \pm 0.000012$ \\
 0.000431 &  $-1.022661 \pm 0.000015$ &  $-1.022746 \pm 0.000012$ &  $-1.022715 \pm 0.000013$ \\
 0.000553 &  $-1.022711 \pm 0.000019$ &  $-1.022799 \pm 0.000020$ &  $-1.022795 \pm 0.000014$ \\
 0.000710 &  $-1.022768 \pm 0.000021$ &  $-1.022876 \pm 0.000012$ &  $-1.022889 \pm 0.000015$ \\
 0.000912 &  $-1.022828 \pm 0.000023$ &  $-1.022953 \pm 0.000040$ &  $-1.023002 \pm 0.000017$ \\
 0.001171 &  $-1.022921 \pm 0.000035$ &  $-1.023046 \pm 0.000040$ &  $-1.023135 \pm 0.000019$ \\
 0.001503 &  $-1.022934 \pm 0.000033$ &  $-1.023155 \pm 0.000063$ &  $-1.023291 \pm 0.000021$ \\
 0.001930 &  $-1.022934 \pm 0.000033$ &  $-1.023241 \pm 0.000050$ &  $-1.023472 \pm 0.000023$ \\
 0.002479 &  $-1.022934 \pm 0.000033$ &  $-1.023334 \pm 0.000070$ &  $-1.023691 \pm 0.000025$ \\
 0.003183 &  $-1.022934 \pm 0.000033$ &  $-1.023500 \pm 0.000156$ &  $-1.023956 \pm 0.000027$ \\
 0.004087 &  $-1.022934 \pm 0.000033$ &  $-1.023679 \pm 0.000119$ &  $-1.024279 \pm 0.000029$ \\
 0.005248 &  $-1.022934 \pm 0.000033$ &  $-1.023971 \pm 0.000105$ &  $-1.024678 \pm 0.000031$ \\
 0.006738 &  $-1.022934 \pm 0.000033$ &  $-1.024159 \pm 0.000243$ &  $-1.025171 \pm 0.000033$ \\
\end{tabular}
\end{table}
\begin{table}
\centering
\caption{Prices for lookback option.}
\label{table-lb-price}
\begin{tabular}{rlll}
\toprule
     Cost &      No-transaction-band &             Feed-forward &               Asymptotic \\
 0.000000 &  $0.018040 \pm 0.000005$ &  $0.018035 \pm 0.000005$ &  $0.018053 \pm 0.000008$ \\
 0.000045 &  $0.018115 \pm 0.000007$ &  $0.018139 \pm 0.000008$ &  $0.018133 \pm 0.000005$ \\
 0.000058 &  $0.018128 \pm 0.000009$ &  $0.018156 \pm 0.000011$ &  $0.018150 \pm 0.000004$ \\
 0.000075 &  $0.018145 \pm 0.000011$ &  $0.018173 \pm 0.000012$ &  $0.018170 \pm 0.000004$ \\
 0.000096 &  $0.018165 \pm 0.000013$ &  $0.018198 \pm 0.000011$ &  $0.018193 \pm 0.000005$ \\
 0.000123 &  $0.018187 \pm 0.000016$ &  $0.018224 \pm 0.000014$ &  $0.018222 \pm 0.000006$ \\
 0.000158 &  $0.018215 \pm 0.000021$ &  $0.018258 \pm 0.000018$ &  $0.018256 \pm 0.000007$ \\
 0.000203 &  $0.018240 \pm 0.000032$ &  $0.018298 \pm 0.000019$ &  $0.018297 \pm 0.000008$ \\
 0.000261 &  $0.018265 \pm 0.000038$ &  $0.018341 \pm 0.000028$ &  $0.018346 \pm 0.000009$ \\
 0.000335 &  $0.018294 \pm 0.000049$ &  $0.018393 \pm 0.000031$ &  $0.018403 \pm 0.000010$ \\
 0.000431 &  $0.018328 \pm 0.000063$ &  $0.018446 \pm 0.000038$ &  $0.018470 \pm 0.000012$ \\
 0.000553 &  $0.018373 \pm 0.000087$ &  $0.018525 \pm 0.000034$ &  $0.018547 \pm 0.000013$ \\
 0.000710 &  $0.018416 \pm 0.000097$ &  $0.018604 \pm 0.000021$ &  $0.018635 \pm 0.000015$ \\
 0.000912 &  $0.018406 \pm 0.000031$ &  $0.018715 \pm 0.000042$ &  $0.018731 \pm 0.000018$ \\
 0.001171 &  $0.018482 \pm 0.000147$ &  $0.018834 \pm 0.000056$ &  $0.018839 \pm 0.000020$ \\
 0.001503 &  $0.018523 \pm 0.000228$ &  $0.018967 \pm 0.000046$ &  $0.018964 \pm 0.000022$ \\
 0.001930 &  $0.018587 \pm 0.000354$ &  $0.019210 \pm 0.000076$ &  $0.019113 \pm 0.000023$ \\
 0.002479 &  $0.018660 \pm 0.000500$ &  $0.019529 \pm 0.000176$ &  $0.019296 \pm 0.000025$ \\
 0.003183 &  $0.018721 \pm 0.000621$ &  $0.019853 \pm 0.000144$ &  $0.019533 \pm 0.000027$ \\
 0.004087 &  $0.018800 \pm 0.000780$ &  $0.020239 \pm 0.000244$ &  $0.019849 \pm 0.000028$ \\
 0.005248 &  $0.018915 \pm 0.001010$ &  $0.020766 \pm 0.000267$ &  $0.020256 \pm 0.000029$ \\
 0.006738 &  $0.019124 \pm 0.001427$ &  $0.021625 \pm 0.000334$ &  $0.020775 \pm 0.000029$ \\
\end{tabular}
\end{table}

\clearpage

\bibliographystyle{rQUF}
\bibliography{paper}

\begin{thebibliography}{35}
\providecommand{\natexlab}[1]{#1}
\providecommand{\noopsort}[1]{}
\providecommand{\printfirst}[2]{#1}
\providecommand{\singleletter}[1]{#1}
\providecommand{\switchargs}[2]{#2#1}

\bibitem[\protect\citeauthoryear{Amilon}{2003}]{amilon2003neural}
Amilon, H., A neural network versus Black--Scholes: a comparison of pricing and
  hedging performances. {\itshape Journal of Forecasting}, 2003, \textbf{22},
  317--335.

\bibitem[\protect\citeauthoryear{Ben-Tal and Teboulle}{1986}]{ben1986expected}
Ben-Tal, A. and Teboulle, M., Expected utility, penalty functions, and duality
  in stochastic nonlinear programming. {\itshape Management Science}, 1986,
  \textbf{32}, 1445--1466.

\bibitem[\protect\citeauthoryear{Ben-Tal and Teboulle}{2007}]{ben2007old}
Ben-Tal, A. and Teboulle, M., An old-new concept of convex risk measures: The
  optimized certainty equivalent. {\itshape Mathematical Finance}, 2007,
  \textbf{17}, 449--476.

\bibitem[\protect\citeauthoryear{Black and Scholes}{1973}]{black1973pricing}
Black, F. and Scholes, M., The pricing of options and corporate liabilities.
  {\itshape Journal of political economy}, 1973, \textbf{81}, 637--654.

\bibitem[\protect\citeauthoryear{Bucci
  {\itshape{et~al.}}}{2019}]{PhysRevLett.122.108302}
Bucci, F., Benzaquen, M., Lillo, F. and Bouchaud, J.P., Crossover from Linear
  to Square-Root Market Impact. {\itshape Phys. Rev. Lett.}, 2019,
  \textbf{122}, 108302.

\bibitem[\protect\citeauthoryear{Buehler
  {\itshape{et~al.}}}{2019}]{buehler2019deep}
Buehler, H., Gonon, L., Teichmann, J. and Wood, B., Deep hedging. {\itshape
  Quantitative Finance}, 2019, \textbf{19}, 1271--1291.

\bibitem[\protect\citeauthoryear{Collins and
  Fabozzi}{1991}]{collins1991methodology}
Collins, B.M. and Fabozzi, F.J., A methodology for measuring transaction costs.
  {\itshape Financial Analysts Journal}, 1991, \textbf{47}, 27--36.

\bibitem[\protect\citeauthoryear{Constantinides}{1986}]{constantinides1986capital}
Constantinides, G.M., Capital market equilibrium with transaction costs.
  {\itshape Journal of political Economy}, 1986, \textbf{94}, 842--862.

\bibitem[\protect\citeauthoryear{Das and Padhy}{2017}]{das2017new}
Das, S.P. and Padhy, S., A new hybrid parametric and machine learning model
  with homogeneity hint for European-style index option pricing. {\itshape
  Neural Computing and Applications}, 2017, \textbf{28}, 4061--4077.

\bibitem[\protect\citeauthoryear{Davis and Norman}{1990}]{davis1990portfolio}
Davis, M.H. and Norman, A.R., Portfolio selection with transaction costs.
  {\itshape Mathematics of operations research}, 1990, \textbf{15}, 676--713.

\bibitem[\protect\citeauthoryear{Davis
  {\itshape{et~al.}}}{1993}]{davis1993european}
Davis, M.H., Panas, V.G. and Zariphopoulou, T., European option pricing with
  transaction costs. {\itshape SIAM Journal on Control and Optimization}, 1993,
  \textbf{31}, 470--493.

\bibitem[\protect\citeauthoryear{Dugas
  {\itshape{et~al.}}}{2009}]{dugas2009incorporating}
Dugas, C., Bengio, Y., B{\'e}lisle, F., Nadeau, C. and Garcia, R.,
  Incorporating Functional Knowledge in Neural Networks.. {\itshape Journal of
  Machine Learning Research}, 2009, \textbf{10}.

\bibitem[\protect\citeauthoryear{Dumas and Luciano}{1991}]{dumas1991exact}
Dumas, B. and Luciano, E., An exact solution to a dynamic portfolio choice
  problem under transactions costs. {\itshape The Journal of Finance}, 1991,
  \textbf{46}, 577--595.

\bibitem[\protect\citeauthoryear{Garcia and
  Gen{\c{c}}ay}{2000}]{garcia2000pricing}
Garcia, R. and Gen{\c{c}}ay, R., Pricing and hedging derivative securities with
  neural networks and a homogeneity hint. {\itshape Journal of Econometrics},
  2000, \textbf{94}, 93--115.

\bibitem[\protect\citeauthoryear{Grannan and
  Swindle}{1996}]{grannan1996minimizing}
Grannan, E.R. and Swindle, G.H., Minimizing transaction costs of option hedging
  strategies. {\itshape Mathematical finance}, 1996, \textbf{6}, 341--364.

\bibitem[\protect\citeauthoryear{Henderson and
  Hobson}{2004}]{henderson2004utility}
Henderson, V. and Hobson, D., Utility indifference pricing-an overview.
  {\itshape Volume on Indifference Pricing}, 2004.

\bibitem[\protect\citeauthoryear{Hochreiter and
  Schmidhuber}{1997}]{hochreiter1997long}
Hochreiter, S. and Schmidhuber, J., Long short-term memory. {\itshape Neural
  computation}, 1997, \textbf{9}, 1735--1780.

\bibitem[\protect\citeauthoryear{Hodges and
  Neuberger}{1989}]{hodges1989optimal}
Hodges, S. and Neuberger, A., Optimal replication of contingent claims under
  transaction costs. {\itshape Review Futures Market}, 1989, \textbf{8},
  222--239.

\bibitem[\protect\citeauthoryear{Hornik}{1991}]{hornik1991approximation}
Hornik, K., Approximation capabilities of multilayer feedforward networks.
  {\itshape Neural networks}, 1991, \textbf{4}, 251--257.

\bibitem[\protect\citeauthoryear{Horvath
  {\itshape{et~al.}}}{2021}]{horvath2021deep}
Horvath, B., Teichmann, J. and Zuric, Z., Deep Hedging under Rough Volatility.
  {\itshape arXiv preprint arXiv:2102.01962}, 2021.

\bibitem[\protect\citeauthoryear{Hutchinson
  {\itshape{et~al.}}}{1994}]{hutchinson1994nonparametric}
Hutchinson, J.M., Lo, A.W. and Poggio, T., A nonparametric approach to pricing
  and hedging derivative securities via learning networks. {\itshape The
  Journal of Finance}, 1994, \textbf{49}, 851--889.

\bibitem[\protect\citeauthoryear{Imajo
  {\itshape{et~al.}}}{2020}]{imajo2020deep}
Imajo, K., Minami, K., Ito, K. and Nakagawa, K., Deep Portfolio Optimization
  via Distributional Prediction of Residual Factors. {\itshape arXiv preprint
  arXiv:2012.07245}, 2020.

\bibitem[\protect\citeauthoryear{Jang and Lee}{2019}]{jang2019generative}
Jang, H. and Lee, J., Generative Bayesian neural network model for risk-neutral
  pricing of American index options. {\itshape Quantitative Finance}, 2019,
  \textbf{19}, 587--603.

\bibitem[\protect\citeauthoryear{Kallsen and
  Muhle-Karbe}{2015}]{kallsen2015option}
Kallsen, J. and Muhle-Karbe, J., Option pricing and hedging with small
  transaction costs. {\itshape Mathematical Finance}, 2015, \textbf{25},
  702--723.

\bibitem[\protect\citeauthoryear{Kingma and Ba}{2014}]{kingma2014adam}
Kingma, D.P. and Ba, J., Adam: A method for stochastic optimization. {\itshape
  arXiv preprint arXiv:1412.6980}, 2014.

\bibitem[\protect\citeauthoryear{Krizhevsky
  {\itshape{et~al.}}}{2017}]{krizhevsky2017imagenet}
Krizhevsky, A., Sutskever, I. and Hinton, G.E., Imagenet classification with
  deep convolutional neural networks. {\itshape Communications of the ACM},
  2017, \textbf{60}, 84--90.

\bibitem[\protect\citeauthoryear{Leland}{1985}]{leland1985option}
Leland, H.E., Option pricing and replication with transactions costs. {\itshape
  The journal of finance}, 1985, \textbf{40}, 1283--1301.

\bibitem[\protect\citeauthoryear{Monoyios}{2004}]{monoyios2004option}
Monoyios, M., Option pricing with transaction costs using a Markov chain
  approximation. {\itshape Journal of Economic Dynamics and Control}, 2004,
  \textbf{28}, 889--913.

\bibitem[\protect\citeauthoryear{Paszke
  {\itshape{et~al.}}}{2019}]{NEURIPS2019_9015}
Paszke, A., Gross, S., Massa, F., Lerer, A., Bradbury, J., Chanan, G., Killeen,
  T., Lin, Z., Gimelshein, N., Antiga, L., Desmaison, A., Kopf, A., Yang, E.,
  DeVito, Z., Raison, M., Tejani, A., Chilamkurthy, S., Steiner, B., Fang, L.,
  Bai, J. and Chintala, S., PyTorch: An Imperative Style, High-Performance Deep
  Learning Library. In , 2019.

\bibitem[\protect\citeauthoryear{Ruf and Wang}{2020}]{ruf2020neural}
Ruf, J. and Wang, W., Neural networks for option pricing and hedging: a
  literature review. {\itshape Journal of Computational Finance, Forthcoming},
  2020.

\bibitem[\protect\citeauthoryear{Vaswani
  {\itshape{et~al.}}}{2017}]{vaswani2017attention}
Vaswani, A., Shazeer, N., Parmar, N., Uszkoreit, J., Jones, L., Gomez, A.N.,
  Kaiser, {\L}. and Polosukhin, I., Attention is all you need. In {\itshape
  Proceedings of the }{\itshape Advances in neural information processing
  systems}, pp. 5998--6008, 2017.

\bibitem[\protect\citeauthoryear{Whalley and
  Wilmott}{1997}]{whalley1997asymptotic}
Whalley, A.E. and Wilmott, P., An asymptotic analysis of an optimal hedging
  model for option pricing with transaction costs. {\itshape Mathematical
  Finance}, 1997, \textbf{7}, 307--324.

\bibitem[\protect\citeauthoryear{Whalley and
  Wilmott}{1999}]{whalley1999optimal}
Whalley, A. and Wilmott, P., Optimal hedging of options with small but
  arbitrary transaction cost structure. {\itshape European Journal of Applied
  Mathematics}, 1999, \textbf{10}, 117--139.

\bibitem[\protect\citeauthoryear{Wolpert and Macready}{1997}]{wolpert1997no}
Wolpert, D.H. and Macready, W.G., No free lunch theorems for optimization.
  {\itshape IEEE transactions on evolutionary computation}, 1997, \textbf{1},
  67--82.

\bibitem[\protect\citeauthoryear{Zakamouline}{2006}]{zakamouline2006european}
Zakamouline, V.I., European option pricing and hedging with both fixed and
  proportional transaction costs. {\itshape Journal of Economic Dynamics and
  Control}, 2006, \textbf{30}, 1--25.

\end{thebibliography}
\end{document}